\documentclass[conference]{ieeeconf}  

\IEEEoverridecommandlockouts
\usepackage{graphicx} 
\usepackage{amsmath} 
\usepackage{array}   
\usepackage{booktabs} 
\usepackage{caption} 
\usepackage[normalem]{ulem}
\usepackage{siunitx} 
\usepackage{epstopdf} 
\usepackage{cite}
\usepackage{subcaption}
\usepackage{amsmath,amssymb,amsfonts,mathtools,nccmath,bm, tikz, enumerate}
\usepackage{multicol}
\usepackage[nodisplayskipstretch]{setspace}
\newtheorem{assumption}{Assumption}[section]
\newtheorem{remark}{Remark}[section]
\newtheorem{theorem}{Theorem}
\newtheorem{lemma}{Lemma} 
\newtheorem{proposition}{Proposition} 
\usepackage{graphicx}
\usepackage{textcomp}
\usepackage{cuted}
\usepackage{soul}
\usepackage{float}
\usepackage{xcolor}
\usepackage{amsfonts}
\usepackage{algorithm}
\usepackage{algpseudocode}
\usepackage[T1]{fontenc}
\def\BibTeX{{\rm B\kern-.05em{\sc i\kern-.025em b}\kern-.08em
    T\kern-.1667em\lower.7ex\hbox{E}\kern-.125emX}}
\allowdisplaybreaks

\usepackage{tabstackengine}
\usetikzlibrary{shapes,shapes.geometric,arrows.meta,fit,calc,positioning,automata,patterns}
\usetikzlibrary{tikzmark,calc,decorations.pathreplacing}
\tikzset{
    node_style/.style={
        draw, fill=green!60, text=black, regular polygon, regular polygon sides=8,
        minimum size=0.4cm, font=\scriptsize\bfseries
    },
    edge_style/.style={draw, ->, thick, purple},
    edge_style_1/.style={draw, <->, thick, purple},
}

\begin{document}


\title{\textbf{
Tube-Based Safety for Anticipative Tracking in Multi-Agent Systems}}

\author{Armel Koulong, \IEEEmembership{Student Member, IEEE} and Ali Pakniyat, \IEEEmembership{Member, IEEE}
\thanks{A. Koulong and A. Pakniyat are with the department of \mbox{Mechanical} Engineering, University of Alabama, Tuscaloosa, AL, USA (e-mails: akoulongzoyem@crimson.ua.edu, apakniyat@ua.edu). }
}

\maketitle
\thispagestyle{plain}
\pagestyle{plain}

\begin{abstract}
A tube-based safety framework is presented for robust anticipative tracking in nonlinear Brunovsky multi-agent systems subject to bounded disturbances. The architecture establishes robust safety certificates for a feedforward-augmented ancillary control policy. By rendering the state-deviation dynamics independent of the agents' internal nonlinearities, the formulation strictly circumvents the restrictive Lipschitz-bound feasibility conditions otherwise required for robust stabilization. Consequently, this structure admits an explicit, closed-form robust positively invariant (RPI) tube radius that systematically attenuates the exponential control barrier function (eCBF) tightening margins, thereby mitigating constraint conservatism while preserving formal forward invariance. Within the distributed model predictive control (MPC) layer, mapping the local tube radii through the communication graph yields a closed-form global formation error bound formulated via the minimum singular value of the augmented Laplacian. Robust inter-agent safety is enforced with minimal communication overhead, requiring only a single scalar broadcast per neighbor at initialization. Numerical simulations confirm the framework's efficacy in safely navigating heterogeneous formations through cluttered environments.
\end{abstract}



\section{Introduction} \label{Introduction}
Leader--follower formations are central to cooperative
autonomy---UAV swarms~\cite{Oh2015UAVSurvey}, mobile sensor
networks~\cite{Olfati-Saber2007ConsensusCoopControl}, and robotic
teams~\cite{Chen2016FormationRoboticsSurvey}---where agents must track a maneuvering
leader while maintaining prescribed geometry, avoiding collisions and
obstacles, and respecting hard constraints under disturbances.
Position-only sensing of the leader introduces tracking phase lag during
fast maneuvers~\cite{LiRenXuTAC2012_PositionOnlyContainment, LiRenXuACC2011_PositionOnlyTracking}.  Collision
and obstacle constraints have high relative degree for high-order
dynamics and must be enforced robustly under bounded disturbances.

Consensus and leader--follower tracking over fixed graphs are well
understood~\cite{RenBeardMcLain2003,Kan2016DirectedRandomGraphs} but typically
omit hard constraints or safety certificates.  Distributed
MPC~\cite{DunbarMurray2006,Keviczky2008, FARINA20121088, Fu70088, Shorinwa2024} addresses
coordination with coupled constraints, including collision
avoidance~\cite{Dai2017DMPCFormationSurveyLike}, but safety is often encoded via
penalties rather than forward-invariance certificates.  Tube-based DMPC
for coupled linear subsystems~\cite{Riverso2012PlugandPlayDM,RIVERSO20142179}
provides local controller synthesis from neighbor information but does
not address high-relative-degree nonlinear safety constraints.  Tube
MPC~\cite{Mayne2005RMPC} provides principled constraint satisfaction
via ancillary feedback and RPI sets for single agents.  Control barrier
functions (CBFs) with exponential/high-order
extensions~\cite{NguyenSreenath2016ECBF,XiaoBelta2019HOCBF} and robust
variants~\cite{XuAmes2015RobustCBF,Jankovic2018RobustCBF} provide
forward-invariance under disturbances.  MPC--CBF
integrations~\cite{Zeng2021MPCDiscreteCBF,Liu2023IterativeMPCDHOCBF} combine predictive
performance with invariance but are predominantly single-agent or
centralized.  A recent distributed safety-critical MPC for nonlinear
MAS formation~\cite{wang2025distributedsafetycriticalmpcmultiagent} does not address model-based leader
feedforward or tube-based robustness margins.

The authors' prior work~\cite{koulong2025wc} established a distributed safety-critical framework for nonlinear multi-agent systems under bounded disturbances. By employing an ancillary feedback law to regulate the deviation between an agent's true state and its recursively planned nominal trajectory, robust positively invariant ellipsoidal tubes are constructed. Their support functions, evaluated along exponential control barrier function (eCBF) gradients, yield tightening margins that ensure nominal constraint satisfaction mathematically guarantees true-trajectory safety. However, applying this general-purpose framework to cooperative leader-follower tracking exposes structural limitations. Treating complex internal dynamics as lumped disturbances without model-based compensation yields a tube radius heavily constrained by the agent's Lipschitz constant, generating increasingly conservative bounds for highly nonlinear systems. Furthermore, this prior approach lacks theoretical mechanisms to bound the global formation error across the communication~graph.

By restricting the problem scope specifically to leader-follower tracking, the present work exploits exact dynamic models of both the leader and followers to achieve fundamentally tighter bounds. First, a model-based feedforward term precisely cancels the follower's internal dynamics from the error evolution and injects the anticipated leader maneuver. By isolating the leader's trajectory deviation as the sole remaining nonlinear disturbance, this architecture completely circumvents the restrictive, Lipschitz-dependent Lyapunov feasibility conditions of~\cite{koulong2025wc}. The safety tube radius is thus transformed from an implicit inequality into an explicit, closed-form expression that directly shrinks required safety margins and reduces conservatism. Furthermore, by formulating the tracking objective via synchronization errors with prescribed inter-agent offsets, we leverage the network topology to derive a closed-form global formation error bound, rigorously connecting individual local tube radii to collective multi-agent performance via the minimum singular value of the augmented communication graph.

The remainder of the paper is structured as follows:
Section~\ref{sec:Preliminaries} states the
system and graph. Section~\ref{sec:Methodology} develops the feedforward
compensation and its impact on the tube radius. Section~\ref{sec:MainResult} develops the formation tracking objective
and global error bound. It also states the tightened eCBF constraints. Section~\ref{sec:NUMERICALEXAMPLE} presents the MPC implementation and demonstrates via simulation. Section~\ref{sec:CONCLUSION} summarizes contributions and potential future work.

\section{Problem Formulation}\label{sec:Preliminaries}

We consider $N$ follower agents $i \in \mathcal{V} = \{1,\ldots,N\}$ and a leader $0$. Each follower has $n$-th order nonlinear Brunovsky
dynamics:
\begin{align}\label{eq:ct_follower}
\dot{x}_p^{i} &= x_{p+1}^{i}, &p \in \{1,\dots,n-1\}, \notag\\
\dot{x}_n^{i} &= f^{i}(x^{i},t) + u^{i} + w^{i}, &w^{i}(t)\in\mathcal D_i,
\end{align} 
where $x_p^i\in\mathbb{R}^d$, $u^i\in\mathbb{R}^d$ is the input, and $w^i\in\mathbb{R}^d$ is a bounded time-varying disturbance with $\|w^{i}(t)\|\le \bar w^i$.
The leader dynamics satisfies:
\begin{align}\label{eq:ct_leader}
\dot{x}_p^{\,0} &= x_{p+1}^{\,0},\hspace{85pt} p \in \{1,\dots,n-1\}, \notag\\
\dot{x}_n^{\,0} &= f^{\,0}(x^{\,0},t) + w^{0},\hspace{40pt} w^{0}(t)\in\mathcal D_0,
\end{align}
where $w^0\in\mathbb{R}^d$ is a bounded time-varying disturbance of leader agent $0$. Both $f^i$ and $f^0$ are locally
Lipschitz with constants $L_i$ and $L_0$ over some acompact domain. 

The agents communicate over a fixed weighted graph $\mathcal{G} = (\mathcal{V},\mathcal{E},A)$ with weighted adjacency $A=[a_{ij}]$, where $a_{ij}>0$ iff $(j,i)\in\mathcal{E}$. The Laplacian is $L:=D-A$. 

The neighbor set of agent $i$ is
$\mathcal{N}_i=\{j: a_{ij}>0\}$. Each agent $i$ has perpetual access to its own full state
$x^i(t)\in\mathbb{R}^{nd}$ 
as well as
the current
state $x^j(t)$ of every 
neighbor $j\in\mathcal{N}_i$ 
over the (fixed) graph
$\mathcal{G}$.

To incorporate a leader agent, we define the augmented graph $\bar{\mathcal{G}}=(\bar{\mathcal{V}},\bar{\mathcal{E}}{\color{black}})$ with $\bar{\mathcal{V}}=\{0,1,\dots,N\}$, which augments the leader interaction matrix $B_0=\mathrm{diag}\{b_{i0}\}$.

\begin{assumption}\label{ass:graph}
The augmented graph $\bar{\mathcal{G}}$ with node set
$\{0\}\cup\mathcal{V}$ contains a spanning tree rooted at the leader.
\end{assumption}

The above assumption is a design-stage condition verified before deployment; it
implies $\nu_1 L+\nu_2 B_0$ is a nonsingular
M-matrix~\cite{koulong2025acc} for any $\nu_1,\nu_2>0$.

The objective of each follower $i$ is to eventually drive the formation synchronization error $(x^i_p - x^0_p - \psi^i_p)$ to zero, robustly tracking the leader's spatial state $x^0_p$ with a prescribed offset $\psi^i_p$ under bounded disturbances and hard safety constraints. Our prior work~\cite{koulong2025wc} established that, under the ancillary control law $u^{\,i} = \bar{u}^i - K_i (x^{\,i} - \bar{x}^{\,i})$ driven by a nominal planner $\bar{u}^i$, an ellipsoidal robust positively invariant (RPI) tube $\mathcal{Z}_i := \{z^i : (z^i)^\top P_i z^i \leq r_i^2\}$ can be constructed with radius:
\begin{align}\label{eq:rho_23}
 r_i \geq \frac{2\bar{w}_{i}\lambda_{\max}(P_{i})}{\lambda_{\min}(Q_{i})-2L_{f^{i}}\lambda_{\max}(P_{i})},
\end{align}
where $P_i \succ 0$ and $Q_i \succ 0$ are associated with the feedback gain $K_i$ through a Lyapunov equation. The support function of $\mathcal{Z}_i$ along each exponential control barrier function (eCBF) gradient yields exact tightening margins ensuring nominal safety implies true-trajectory safety. However, \eqref{eq:rho_23} requires the implicit feasibility constraint $\lambda_{\min}(Q_i) > 2L_{f^i}\lambda_{\max}(P_i)$. Because the Lipschitz constant $L_{f^i}$ appears subtractively in the denominator, highly nonlinear dynamics drive this denominator toward zero, forcing conservative, excessively large tube radii $r_i$ that restrict physical performance.

By focusing specifically on leader-follower tracking with known explicit models, the present work resolves this limitation. We augment the applied control with a model-based feedforward term, yielding $u^{\,i} = \bar{v}^i + f^0(\hat x^0,t) - f^i(x^i,t) - K_i (x^{\,i} - \bar{x}^{\,i})$. This exact dynamic cancellation yields a new, explicitly defined tube radius:
\begin{equation}
  r_i \geq \frac{2 \bar{w}^i_{\mathrm{eff}} \lambda_{\max}(P_i)^{3/2}\|G\|}
  {\lambda_{\min}(Q_i)}.
\end{equation}
While the general-purpose framework of~\cite{koulong2025wc} must conservatively bound the agents' inherent nonlinearities as generic lumped disturbances, the proposed feedforward compensation actively exploits the tracking objective to perfectly isolate the deviation dynamics from these known models. Consequently, the only remaining disturbance sources are the follower's bound $w^i$ and the leader's disturbance $w^0$, which propagates via the mismatch $f^0(x^0,t)-f^0(\hat{\bar x}^0,t)$. Both effects are absorbed into a fixed scalar $\bar w^i_{\mathrm{eff}}$. This completely eliminates the problematic $L_{f^i}$ dependency in~\eqref{eq:rho_23}, yielding an explicit closed-form radius free from restrictive feasibility conditions (Proposition~\ref{prop:tube}). Since implementing this requires access to the leader's state $x^0(t)$, either directly or via network relays, we impose the following assumption.

\begin{assumption}\label{ass:leader_model}
The leader dynamics model $f^0$ is known to all followers at 
design time. Agents with $b_{i0}>0$ have additional direct access to the 
leader state $x^0(t)$,
\[
  \hat{x}^0(t) :=
  \begin{cases}
    x^0(t), & b_{i0}>0 \quad\text{(direct leader access)},\\
    \text{relayed estimate}, & b_{i0}=0.
  \end{cases}
\]
\end{assumption}


For agents with $b_{i0}=0$, Assumption~\ref{ass:graph} guarantees the existence of a finite multi-hop communication path of length~$l$ to the leader; these agents maintain a propagated estimate $\hat{x}^0_i(t)$ forwarded along this path. The resulting feedforward mismatch satisfies $\|f^0(\hat{x}^0_i,t)-f^0(x^0,t)\|\leq L_0 e^0_i$ for a bounded $e^0_i$, which equals zero for directly connected agents and is absorbed into the disturbance bound $\bar{w}^i$ of agent~$i$.

Assumptions~\ref{ass:graph}--\ref{ass:leader_model} are all
design-time verifiable or satisfied by standard onboard sensing.
They govern the continuous-time problem setup only.  The discretization
of the dynamics and the inter-agent plan exchange protocol used for
multi-step prediction inside the OCP are implementation-level details
introduced in Section~\ref{sec:Methodology}.
 
\section{Methodology}\label{sec:Methodology}

\subsection{Feedforward Compensation} To  anticipate the leader’s full dynamics and overcome tracking delays inherent in position-only sensing \cite{Li2012}, each follower applies: 
\begin{equation}\label{eq:feedfwd_input} u^{i}(t) = v^{i}(t) + f^{0}(\hat{x}^0,t) - f^{i}(x^{i},t), 
\end{equation} 
where $-f^i(x^i,t)$ cancels the follower's own nonlinear dynamics,
$f^0(\hat{x}^0,t)$ feedforwards the full leader state estimate available to agent $i$ (equal to $f^{0}(x^0,t)$ exactly when $b_{i0}>0$, and using the relayed estimate $\hat{x}^0$ when $b_{i0}=0$), and $v^i$ is the feedback input designed below.

\subsection{Impact on the Deviation Dynamics}\label{sub:stacked_tube_err}
The ancillary feedback from \cite{koulong2025wc} is:
\begin{align}\label{eq:ancillary_control}
\hspace{-0.5em} v^{i} &= \bar v^{i} -\sum_{p=1}^n K_p^{i}\,\delta x^{i}
      \;=\; \bar v^{i}-K^{i}\delta x^{i}, \quad \delta x^i := x^i-\bar x^i
\end{align}
where $\bar{v}^i$ is the nominal MPC input, $\bar x^i$ is the nominal state of $i$, $K_p^{i}\in\mathbb{R}^{d\times d}$ is the ancillary gain and
$K^{i}\in\mathbb{R}^{d\times nd}$ is the stacked representation.
The nominal
trajectory $\bar{x}^i$ satisfies:
\begin{equation}\label{eq:nominal_dyn}
  \dot{\bar{x}}^i_p = \bar{x}^i_{p+1},\quad
  \dot{\bar{x}}^i_n = \bar{v}^i + f^0(\hat{\bar{x}}^0,t),
\end{equation}
where the feedforward $f^0(\hat{\bar{x}}^0,t)$ uses the nominal leader trajectory 
$\hat{\bar{x}}^0$ propagated from $\hat{x}^0$.
The full true closed-loop dynamics after substituting the feedforward control law \eqref{eq:feedfwd_input} and the ancillary feedback \eqref{eq:ancillary_control} is
\begin{align}\label{eq:ct_follower0}
\dot{x}_p^{i} &= x_{p+1}^{i}, \quad p \in \{1,\dots,n-1\}, \notag\\
\dot{x}_n^{i} &= \bar v^{i}-K^{i}\delta x^{i} + f^{0}(\hat{x}^{0},t)  + w^{i},
\end{align} 
with $\delta x^i_p = x^i_p - \bar x^i_p$, we get
\begin{align}\label{eq:ct_follower1}
\delta\dot{x}^i_p &= \delta x^i_{p+1}, \quad p \in \{1,\dots,n-1\}, \notag\\
\delta\dot{x}^i_n &= -K^i_p\delta x^i + \bigl[f^0(\hat{x}^0,t) - f^0(\hat{\bar{x}}^0,t)\bigr] + w^i.
\end{align} 
In compact form
\begin{equation}
  \delta\dot{x}^i = (A_0 - GK^i_p)\,\delta x^i
  + G\bigl[\underbrace{f^0(\hat{x}^0,t)-f^0(\hat{\bar{x}}^0,t)}_{\text{leader deviation}}
  + w^i\bigr],
  \label{eq:devdyn}
\end{equation}
where $A_0^i \in \mathbb{R}^{nd \times nd}$ is the chain-of-integrators block matrix and $G^i \in \mathbb{R}^{nd \times d}$ selects
the highest-order block. The follower nonlinearity $f^i$ does
not appear in~\eqref{eq:devdyn}: it is cancelled exactly by the
feedforward.
\[
A_0^{i} =
\begin{bmatrix}
0 & I_d & 0 & \cdots & 0 \\
0 & 0 & I_d & \cdots & 0 \\
\vdots & \vdots & \ddots & \ddots & \vdots \\
0 & 0 & \cdots & 0 & I_d \\
0 & 0 & \cdots & 0 & 0
\end{bmatrix},\qquad
G^{i} =
\begin{bmatrix}
0\\ \vdots\\ 0\\ I_d
\end{bmatrix}.
\]

\subsection{Leader Tube Radius}
The leader deviation $\delta x^0:=x^0-\bar{x}^0$ is bounded by the
leader tube radius $\bar{r}_0$.  This follows from applying the RPI
analysis of~\cite{koulong2025wc} to the leader system with disturbance $w^0$ ($\bar{w}^0$ is a design-time parameter chosen to cover the worst case satisfying $\|w^0(t)\| \leq \bar{w}^0$ for all $t \geq 0$) and Lipschitz constant $L_0$:
\begin{equation}
  \bar{r}_0 = \frac{{\rho_0}}{\sqrt{\lambda_{\min}(P_0)}},\quad
  \rho_0\geq\frac{2\lambda_{\max}(P_0)\|G\|\bar{w}^0}
  {\lambda_{\min}(Q_0)-2\lambda_{\max}(P_0)\|G\|L_0},
  \label{eq:r0}
\end{equation}
under $\lambda_{\min}(Q_0)>2\lambda_{\max}(P_0)\|G\|L_0$, giving
$\|\delta x^0_p(t)\|\leq\bar{r}_0$ for all $t\geq t_0$.

\subsection{Per-Agent Tube Radius under Feedforward}

\begin{proposition}[Explicit Tube Radius]\label{prop:tube}
Let $K^i_p$ make $A^i_p:=A_0-GK^i_p$ Hurwitz with
Lyapunov pair $Q_i,P_i\succ 0$ satisfying
$(A^i_p)^\top P_i+P_iA^i_p=-Q_i$.  We define the effective
disturbance bound:
\begin{equation}
  \bar{w}^i_{\mathrm{eff}} := \bar{w}^i + L_0\bar{r}_0.
  \label{eq:weff}
\end{equation}
Then the ellipsoidal tube
$\mathcal{Z}^i:=\{\delta x^i:(\delta x^i)^\top P_i\delta x^i
\leq r_i^2\}$ is robustly positively invariant
under~\eqref{eq:ancillary_control}--\eqref{eq:devdyn} for:
\begin{equation}
  r_i \geq \frac{2\lambda_{\max}(P_i)^{3/2}\|G\|\,\bar{w}^i_{\mathrm{eff}}}
  {\lambda_{\min}(Q_i)}.
  \label{eq:ri}
\end{equation}
The true state deviation satisfies $\|\delta x^i(t)\| \leq
r_i/\sqrt{\lambda_{\min}(P_i)} =: r^i_{\mathrm{ball}}$
for all $t \geq t_0$.
The radius $r_i$ is computed entirely from agent $i$'s own
Lyapunov parameters, $\bar{w}^i$, and the shared scalar $\bar{r}_0$;
no Lipschitz constant $L_i$ appears, in contrast
to~\cite{koulong2025wc}.
\hfill$\square$
\end{proposition}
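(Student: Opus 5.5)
We plan a standard Lyapunov sublevel-set argument applied to the deviation dynamics \eqref{eq:devdyn}, with $V_i(\delta x^i):=(\delta x^i)^\top P_i\,\delta x^i$.

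The first step is to bound the lumped disturbance entering through $G$. Write
\[
d^i(t):=f^0(\hat x^0,t)-f^0(\hat{\bar x}^0,t)+w^i(t).
\]
By the local Lipschitz property of $f^0$ (constant $L_0$) on the compact domain, and by the leader tube bound $\|\delta x^0\|\le\bar r_0$ from \eqref{eq:r0}, the mismatch satisfies $\|f^0(\hat x^0,t)-f^0(\hat{\bar x}^0,t)\|\le L_0\bar r_0$. Together with $\|w^i\|\le\bar w^i$ this gives $\|d^i(t)\|\le\bar w^i_{\mathrm{eff}}$ as in \eqref{eq:weff}. For relayed agents ($b_{i0}=0$), the extra estimation error $L_0e^0_i$ is absorbed into $\bar w^i$, as stated after Assumption~\ref{ass:leader_model}. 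We must also keep the trajectories inside the compact set on which $L_0$ is valid. This will follow a posteriori from the invariance of the bounded tubes.

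The second step differentiates $V_i$ along \eqref{eq:devdyn}. Using the Lyapunov equation,
\[
\dot V_i=-(\delta x^i)^\top Q_i\,\delta x^i+2(\delta x^i)^\top P_iG\,d^i\le-\lambda_{\min}(Q_i)\|\delta x^i\|^2+2\lambda_{\max}(P_i)\|G\|\,\bar w^i_{\mathrm{eff}}\|\delta x^i\|.
\]
No $f^i$ term appears, because it was cancelled by \eqref{eq:feedfwd_input}. Hence no subtractive Lipschitz term enters, and $\dot V_i<0$ whenever
\[
\|\delta x^i\|>\frac{2\lambda_{\max}(P_i)\|G\|\,\bar w^i_{\mathrm{eff}}}{\lambda_{\min}(Q_i)}=:\beta_i .
\]

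The third step, which is the main obstacle, converts this norm condition into a sublevel-set condition with exactly the exponent $3/2$ of \eqref{eq:ri}. On the boundary $V_i=r_i^2$ we have $\|\delta x^i\|\ge r_i/\sqrt{\lambda_{\max}(P_i)}$. It therefore suffices that $r_i/\sqrt{\lambda_{\max}(P_i)}\ge\beta_i$, i.e. $r_i\ge\sqrt{\lambda_{\max}(P_i)}\,\beta_i$. This is precisely \eqref{eq:ri}, and then $\dot V_i\le0$ on $\partial\mathcal Z^i$. A Nagumo/comparison argument then gives forward invariance of $\mathcal Z^i$ for all admissible disturbances, assuming $\delta x^i(t_0)\in\mathcal Z^i$ (e.g. $\bar x^i(t_0)=x^i(t_0)$). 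If the boundary case with equality is delicate, we would take a strict inequality, or argue that $\{V_i\le r_i^2\}$ contains the set $\{\|\delta x^i\|\le\beta_i\}$ where $\dot V_i$ may be positive, so trajectories cannot exit.

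Finally, $\lambda_{\min}(P_i)\|\delta x^i\|^2\le V_i\le r_i^2$ yields $\|\delta x^i(t)\|\le r_i/\sqrt{\lambda_{\min}(P_i)}=r^i_{\mathrm{ball}}$ for all $t\ge t_0$. The remark on data dependence is read off directly from \eqref{eq:ri}: only $P_i$, $Q_i$, $\bar w^i$ and the broadcast scalar $\bar r_0$ (together with the known $L_0$) enter, and no $L_i$ appears.
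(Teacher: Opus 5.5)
Your proposal is correct and follows the paper's proof essentially step for step. It uses the same lumped disturbance $d^i$ bounded by $\bar w^i_{\mathrm{eff}}=\bar w^i+L_0\bar r_0$, the same Lyapunov derivative estimate with no $f^i$ term, the same use of $\|\delta x^i\|\ge r_i/\sqrt{\lambda_{\max}(P_i)}$ on $\partial\mathcal Z^i$ to get the $\lambda_{\max}(P_i)^{3/2}$ factor, and the same final step via $V_i\ge\lambda_{\min}(P_i)\|\delta x^i\|^2$. You also spell out two points the paper leaves implicit: the equality case (the ball $\{\|\delta x^i\|\le\beta_i\}$ lies inside $\mathcal Z^i$, so $V_i$ cannot grow past $r_i^2$) and the initial condition $\delta x^i(t_0)\in\mathcal Z^i$. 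The one loose aside is your compact-domain remark: keeping the leader states where $L_0$ is valid follows from the leader tube together with boundedness of the nominal leader trajectory, not from agent $i$'s own tube invariance.
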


\begin{proof}
From \eqref{eq:devdyn}, the feedforward cancels $f^i$ exactly, leaving: 
\[ 
\delta\dot{x}^i = A^i_p \delta x^i + G d^i(t), \qquad A^i_p := A_0 - G K^i_p 
\]
where the lumped disturbance $d^i := f^0(\hat{x}^0,t) - f^0(\hat{\bar{x}}^0,t) + w^i$ satisfies: 
\[
\|d^i(t)\| \leq L_0\bar{r}_0 + \bar{w}^i =: \bar{w}^i_{\mathrm{eff}} 
\]
Since $K^i_p$ renders $A^i_p$ Hurwitz, let $P_i, Q_i \succ 0$ solve $(A^i_p)^\top P_i + P_i A^i_p = -Q_i$. With $V_i = (\delta x^i)^\top P_i \delta x^i$, substituting the deviation dynamics: 
\[ 
\dot{V}_i = -(\delta x^i)^\top Q_i\,\delta x^i + 2(\delta x^i)^\top P_i G d^i(t) 
\]
\[ 
\dot{V}_i \leq -\lambda_{\min}(Q_i)\|\delta x^i\|^2 + 2\lambda_{\max}(P_i)\|G\|\bar{w}^i_{\mathrm{eff}}\|\delta x^i\|. 
\]
Since $f^i$ does not appear, the leading coefficient
$-\lambda_{\min}(Q_i) < 0$ unconditionally — no Lipschitz
feasibility condition is required.

The set $\mathcal{Z}^i$ is RPI if $\dot{V}_i \leq 0$ on
$\partial\mathcal{Z}^i$, which requires:
\[
\lambda_{\min}(Q_i)\|\delta x^i\| \geq
2\lambda_{\max}(P_i)\|G\|\bar{w}^i_{\mathrm{eff}}.
\]
On $\partial\mathcal{Z}^i$, $V_i = r_i^2$ and
$V_i \leq \lambda_{\max}(P_i)\|\delta x^i\|^2$, so:
\[
\|\delta x^i\| \geq \frac{r_i}{\sqrt{\lambda_{\max}(P_i)}}.
\]
Substituting this lower bound as a sufficient condition and
solving for $r_i$:
\[
r_i \geq \frac{2\lambda_{\max}(P_i)^{3/2}\|G\|\bar{w}^i_{\mathrm{eff}}}
             {\lambda_{\min}(Q_i)}.
\]
With $r_i$ satisfying~\eqref{eq:ri}, $\dot{V}_i \leq 0$ on
$\partial\mathcal{Z}^i$, so $\mathcal{Z}^i$ is RPI.
Finally, from $V_i \leq r_i^2$ and
$V_i \geq \lambda_{\min}(P_i)\|\delta x^i\|^2$:
\[
\|\delta x^i(t)\| \leq \frac{r_i}{\sqrt{\lambda_{\min}(P_i)}}
=: r^i_{\mathrm{ball}}, \quad \forall\, t \geq t_0.
\]
\end{proof}

\begin{remark}[Structural difference from~\cite{koulong2025wc}]
Without feedforward~\eqref{eq:feedfwd_input}, the deviation dynamics
of~\cite{koulong2025wc} contain $f^i(x^i,t)-f^i(\bar{x}^i,t)$,
bounded by $L_{f^i}\|\delta x^i\|$, which contributes
$2L_{f^i}\lambda_{\max}(P_i)\|\delta x^i\|^2$ to $\dot{V}_i$ and
reduces the coefficient of $\|\delta x^i\|^2$ in the Lyapunov
derivative bound to $\lambda_{\min}(Q_i) - 2L_{f^i}\lambda_{\max}(P_i)$.
For this to be positive --- so that $\dot{V}_i$ can be driven
negative on $\partial\mathcal{Z}^i$ --- the feasibility condition
$\lambda_{\min}(Q_i) > 2L_{f^i}\lambda_{\max}(P_i)$ must hold, and
the resulting radius $\rho_i$ grows unboundedly as $L_{f^i}$
approaches this limit.
With the feedforward~\eqref{eq:feedfwd_input}, the $f^i$ term is
absent from~\eqref{eq:devdyn}; the coefficient of $\|\delta x^i\|^2$
in $\dot{V}_i$ reduces to $\lambda_{\min}(Q_i) > 0$ unconditionally,
so any $K^i_p$ rendering $A^i_p$ Hurwitz suffices, and the tube
radius becomes the explicit closed-form expression~\eqref{eq:ri}
with no condition on $L_{f^i}$.
\end{remark}

\begin{remark}[Quantitative improvement over~\cite{koulong2025wc}]
In~\cite{koulong2025wc}, $f^i$ appears in the deviation dynamics,
contributing $2L_{f^i}\lambda_{\max}(P_i)\|z^i\|^2$ to $\dot{V}_i$.
This reduces the effective coefficient to
$\lambda_{\min}(Q_i) - 2L_{f^i}\lambda_{\max}(P_i)$, which must
remain positive, imposing the feasibility condition
$\lambda_{\min}(Q_i) > 2L_{f^i}\lambda_{\max}(P_i)$, and yields:
\[
     r^{\mathrm{no\,FF}}_i \geq \frac{2\bar{w}_i\lambda_{\max}(P_i)}
    {\lambda_{\min}(Q_i) - 2L_{f^i}\lambda_{\max}(P_i)}.
\]
This radius grows unboundedly as $L_{f^i}$ approaches the
feasibility limit and must be verified to be finite for each agent.
Under the feedforward~\eqref{eq:feedfwd_input}, $f^i$ is absent
from the deviation dynamics, the coefficient reduces to
$\lambda_{\min}(Q_i) > 0$ unconditionally, and~\eqref{eq:ri} gives:
\[
    r^{\mathrm{FF}}_i \geq \frac{2\lambda_{\max}(P_i)^{3/2}\|G\|
    \bar{w}^i_{\mathrm{eff}}}{\lambda_{\min}(Q_i)},
\]
with no condition on $L_{f^i}$ and no risk of the denominator
vanishing, regardless of how nonlinear the follower dynamics are. 
\end{remark}

\begin{remark}[Conservatism of worst-case tube design]
The tube radius $r^i$ is sized for the worst-case disturbance bound 
$\bar{w}^i$ uniformly over all time, which is conservative when the 
actual $w^i(t)$ is significantly smaller than $\bar{w}^i$ during 
typical operation.  Adaptive or time-varying tube designs that exploit 
online disturbance measurements could reduce this conservatism but 
would require the scalar $r^j(t_k)$ to be re-broadcast at every 
timestep, increasing communication overhead.  The fixed worst-case 
design is retained here for simplicity and to preserve the one-time 
broadcast property.
\end{remark}

\subsection{Formation Tracking and Global Error Bound}

\subsubsection{Synchronization Error and Formation Objective}

The local synchronization error for agent $i$ at state order $p$ is:
\begin{multline}\label{eq:local_syn_err}
e_p^i := - \nu_1 \sum_{j\in\mathcal N_i} a_{ij}
\big[(x_p^i - \psi_p^i) - (x_p^j - \psi_p^j)\big]  \\
- \nu_2 b_{i0}
\big[(x_p^i - \psi_p^i) - (x_p^0 - \psi_p^0)\big],
\end{multline}
where $\psi^i_p\in\mathbb{R}^d$ is the prescribed formation offset for
agent $i$ at order $p$. For agents with $b_{i0}>0$, the second term in
\eqref{eq:local_syn_err} enforces direct synchronization to the leader.  For agents with $b_{i0}=0$, this term vanishes, so $e^i_p$ is computed
solely from the agent's own state and its neighbors' states; no direct
knowledge of $x^0$ is required.  Thus, \eqref{eq:local_syn_err} is a graph-based synchronization error, not a per-agent estimate error to the leader.  The locally available leader signal $\hat{x}^0$ enters
instead through the feedforward law~\eqref{eq:feedfwd_input}, the nominal
leader propagation~\eqref{eq:nominal_dyn}, and the local deviation
dynamics~\eqref{eq:devdyn}. When $e^i_p=0$ for all $i$ and $p$, all
agents maintain their desired relative positions to neighbors and the
leader.  The nominal counterpart $\bar{e}^i_p$ replaces true states
with nominal trajectories; under~\eqref{eq:feedfwd_input} the nominal error dynamics reduce to a chain-of-integrators form driven by the optimizer input $\bar{v}^i$ and neighbor plan predictions.  The MPC
cost (Section~\ref{sec:MainResult}) penalizes $\bar{e}^i$ to drive the
formation error to zero.

\subsubsection{Connecting Local Tube Radii to Global Formation
Error}

The true synchronization error at order $p$ is given by ~\eqref{eq:local_syn_err} and its corresponding nominal synchronization error $\bar{e}^i_p$ has the same structure but with nominal states. Computing $e^i_p - \bar{e}^i_p$ term by term, the formation offsets $\psi^i_p$, $\psi^j_p$, $\psi^0_p$ are constant and cancel exactly:
\begin{multline}
 e^i_p - \bar{e}^i_p = -\nu_1\sum_{j\in\mathcal{N}_i}a_{ij}\bigl[(x^i_p - \bar{x}^i_p) - (x^j_p - \bar{x}^j_p)\bigr] \\ - \nu_2 b_{i0}\bigl[(x^i_p - \bar{x}^i_p) - (x^0_p - \bar{x}^0_p)\bigr].   
\end{multline}
Substituting the deviation definitions $\delta x^i_p := x^i_p - \bar{x}^i_p$, $\delta x^j_p := x^j_p - \bar{x}^j_p$, $\delta x^0_p := x^0_p - \bar{x}^0_p$:
\begin{equation}
e^i_p - \bar{e}^i_p = -\nu_1\sum_{j\in\mathcal{N}_i}a_{ij}\bigl(\delta x^i_p - \delta x^j_p\bigr) - \nu_2 b_{i0}\bigl(\delta x^i_p - \delta x^0_p\bigr)
\end{equation}
Expanding and collecting terms in $\delta x_p^i$ gives:
\begin{multline}
e^i_p - \bar{e}^i_p = -(\nu_1 \sum_{j\in\mathcal{N}_i}a_{ij} + \nu_2 b_{i0}) \delta x^i_p \\ + \nu_1\sum_{j\in\mathcal{N}_i}a_{ij}\delta x^j_p + \nu_2 b_{i0}\delta x^0_p,
\end{multline}
which is the exact linear map from physical state deviations to synchronization error deviation.

\begin{lemma}[Synchronization-Error Deviation Bound]\label{lem:zbound}
For all $t\geq t_0$ and all $p\in\{1,\ldots,n\}$:
\begin{equation}
  \|e^i_p(t) - \bar{e}^i_p(t)\| \leq \bar{z}^i.
  \label{eq:zboundresult}
\end{equation}
\end{lemma}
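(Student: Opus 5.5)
The plan is to start from the exact linear map derived just before the lemma,
\[
e^i_p - \bar{e}^i_p = -\Bigl(\nu_1 \sum_{j\in\mathcal{N}_i}a_{ij} + \nu_2 b_{i0}\Bigr)\delta x^i_p + \nu_1\sum_{j\in\mathcal{N}_i}a_{ij}\,\delta x^j_p + \nu_2 b_{i0}\,\delta x^0_p ,
\]
and bound each term with the triangle inequality and the tube bounds already established. Since the lemma leaves $\bar{z}^i$ implicit, I will define it as whatever this estimate produces, namely
\[
\bar{z}^i := \Bigl(\nu_1 \sum_{j\in\mathcal{N}_i}a_{ij} + \nu_2 b_{i0}\Bigr) r^i_{\mathrm{ball}} + \nu_1\sum_{j\in\mathcal{N}_i}a_{ij}\, r^j_{\mathrm{ball}} + \nu_2 b_{i0}\,\bar{r}_0 .
\]
This choice is what makes the single scalar broadcast sufficient. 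Agent $i$ needs only its own $r^i_{\mathrm{ball}}$, the scalars $r^j_{\mathrm{ball}}$ received once from its neighbors, and the shared leader radius $\bar r_0$.

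First, for every $p$, $\|\delta x^i_p(t)\| \le \|\delta x^i(t)\|$, because $\delta x^i_p$ is a sub-block of the stacked vector $\delta x^i$. Proposition~\ref{prop:tube} gives $\|\delta x^i(t)\|\le r^i_{\mathrm{ball}}$ for all $t\ge t_0$, provided $\delta x^i(t_0)\in\mathcal{Z}^i$, which is the standard tube-MPC initialization $\bar x^i(t_0)=x^i(t_0)$. The same argument applied to each neighbor $j\in\mathcal{N}_i$ gives $\|\delta x^j_p(t)\|\le r^j_{\mathrm{ball}}$; this uses the fact that every follower runs the same feedforward-augmented ancillary law, so Proposition~\ref{prop:tube} holds for $j$ with its own $P_j,Q_j,\bar w^j_{\mathrm{eff}}$. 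For the leader, \eqref{eq:r0} gives $\|\delta x^0_p(t)\|\le \bar r_0$ directly.

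Second, take norms in the linear map, apply the triangle inequality and the fact that $a_{ij},b_{i0},\nu_1,\nu_2\ge 0$, and substitute the three bounds. This yields $\|e^i_p-\bar e^i_p\|\le\bar z^i$ uniformly in $p$ and in $t\ge t_0$, which is \eqref{eq:zboundresult}.

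The main obstacle is not algebraic. It is justifying that the tube invariance holds simultaneously for all agents, including the leader, on $[t_0,\infty)$. The leader bound $\bar r_0$ enters $\bar w^i_{\mathrm{eff}}$ and therefore $r^i$, so there is a one-directional dependency: the leader tube must be established first, and then the follower tubes. Because the leader tube does not depend on any follower, this ordering is not circular and the argument goes through. For agents with $b_{i0}=0$, the relay mismatch must be absorbed into $\bar w^i$ as stated after Assumption~\ref{ass:leader_model}, so that Proposition~\ref{prop:tube} applies to them unchanged. If a tighter constant is wanted, one could instead work with the block-$p$ projection of the ellipsoid, replacing $r^i_{\mathrm{ball}}$ by $r_i\sqrt{\lambda_{\max}(E_p^\top P_i^{-1}E_p)}$, where $E_p$ denotes the selector of the $p$-th block. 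I would note this refinement but keep the ball bound for simplicity.
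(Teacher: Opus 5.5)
Your argument is correct and is essentially the paper's proof: take norms in the exact linear map, apply the triangle inequality with the nonnegative weights $\nu_1 a_{ij}$ and $\nu_2 b_{i0}$, and substitute the follower and leader tube bounds. The one difference is the constant. The paper's proof substitutes $\|\delta x^i_p\|\le r_i$ and $\|\delta x^j_p\|\le r_j$, so its $\bar z^i$ is built from the ellipsoid levels; you use $r^i_{\mathrm{ball}}$ and $r^j_{\mathrm{ball}}$, which is what Proposition~\ref{prop:tube} actually delivers (in general it gives $\|\delta x^i_p\|\le r_i$ only when $\lambda_{\min}(P_i)\ge 1$), so your $\bar z^i$ is the more defensible definition, at the price that the one-time broadcast scalar must be $r^j_{\mathrm{ball}}$ rather than $r_j$.
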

\begin{proof}
Since $e^i_p - \bar{e}^i_p = -(\nu_1 \sum_{j\in\mathcal{N}_i}a_{ij} +\nu_2 b_{i0})\delta x^i_p
+\nu_1\sum_{j\in\mathcal{N}_i} a_{ij}\delta x^j_p+\nu_2 b_{i0}\delta x^0_p$, applying
the triangle inequality and $\|\delta x^i_p\|\leq r_i$,
$\|\delta x^j_p\|\leq r_j$, $\|\delta x^0_p\|\leq\bar{r}_0$, then
$
\|e^i_p - \bar{e}^i_p\| \leq (\nu_1 d_i + \nu_2 b_{i0})\|\delta x^i_p\| + \nu_1\sum_{j\in\mathcal{N}_i}a_{ij}\|\delta x^j_p\| + \nu_2 b_{i0}\|\delta x^0_p\|
$
and
$
\|e^i_p - \bar{e}^i_p\| \leq (\nu_1 d_i + \nu_2 b_{i0})r_i + \nu_1\sum_{j\in\mathcal{N}_i}a_{ij}r_j + \nu_2 b_{i0}\bar{r}_0 =: \bar{z}^i
$ which
gives~\eqref{eq:zboundresult}.
\end{proof}

The bound $\bar{z}^i$ requires only agent $i$'s own radius $r_i$,
the leader radius $\bar{r}_0$, and the neighbor scalars
$\{r_j\}_{j\in\mathcal{N}_i}$.  Each $r_j$ is computed locally by
agent $j$ using~\eqref{eq:ri} and broadcast to direct neighbors at initialization.

\begin{theorem}[Global Formation Error Bound]\label{thm:formation}
Let ${sv}_* \;:=\; \inf_{\sigma\in\Sigma}\; \sigma_{\min}\!\big(\nu_1 L+\nu_2 B_0\big)\;>\;0$ (guaranteed by
Assumption~\ref{ass:graph}).  If the nominal errors satisfy
$\|\bar{e}^i(t)\|_2\to 0$ as $t\to\infty$ for all $i$ (ensured by
the terminal ingredients of Assumption~\ref{ass:terminal}), then:
\begin{multline}\label{eq:globalbound}
\limsup_{t \to \infty} \left(\sum_{i=1}^{N} \|(x_p^i-\psi_p^i) - (x_p^0-\psi_p^0)\|_2^2\right)^{1/2}  
\\ \le \frac{1}{{sv}_*} \left( \sum_{i=1}^N (\bar{z}^i)^2 \right)^{1/2}.
\end{multline}
\hfill $\square$
\end{theorem}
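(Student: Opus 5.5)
The plan is to stack the local synchronization errors over all agents at a fixed order $p$ and to recognize the stacked vector as the formation error multiplied by a nonsingular graph matrix. Define the formation tracking error $\epsilon^i_p := (x^i_p-\psi^i_p)-(x^0_p-\psi^0_p)$ and stack $\epsilon_p := \mathrm{col}(\epsilon^1_p,\dots,\epsilon^N_p)\in\mathbb{R}^{Nd}$ and $e_p:=\mathrm{col}(e^1_p,\dots,e^N_p)$. Adding and subtracting $(x^0_p-\psi^0_p)$ inside each bracket of \eqref{eq:local_syn_err}, the leader terms cancel in the neighbor differences. This gives $e^i_p = -\nu_1\sum_{j}a_{ij}(\epsilon^i_p-\epsilon^j_p)-\nu_2 b_{i0}\epsilon^i_p$. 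In stacked form this reads
\begin{equation*}
e_p = -\big((\nu_1 L+\nu_2 B_0)\otimes I_d\big)\,\epsilon_p .
\end{equation*}

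Next I invert the map. By Assumption~\ref{ass:graph}, $M:=\nu_1 L+\nu_2 B_0$ is a nonsingular M-matrix. The singular values of $M\otimes I_d$ coincide with those of $M$, so $\|\epsilon_p\|_2\le \|e_p\|_2/\sigma_{\min}(M)\le \|e_p\|_2/{sv}_*$, where the infimum over $\Sigma$ covers any admissible family of weights or topologies.

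Then I split $e_p=\bar e_p+(e_p-\bar e_p)$ and apply the triangle inequality. By Lemma~\ref{lem:zbound}, each block satisfies $\|e^i_p-\bar e^i_p\|\le\bar z^i$ for all $t\ge t_0$, so $\|e_p-\bar e_p\|_2\le(\sum_i(\bar z^i)^2)^{1/2}$ uniformly in time. Combining,
\begin{equation*}
\|\epsilon_p(t)\|_2\le \frac{1}{{sv}_*}\Big(\|\bar e_p(t)\|_2+\big(\textstyle\sum_{i=1}^N(\bar z^i)^2\big)^{1/2}\Big).
\end{equation*}
Taking $\limsup_{t\to\infty}$ and using the hypothesis $\|\bar e^i(t)\|\to0$ for every $i$, which implies $\|\bar e_p(t)\|_2\to0$ for a finite stack, yields \eqref{eq:globalbound}.

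The main obstacle is the first step: verifying that the stacked synchronization error is exactly $-(M\otimes I_d)\epsilon_p$. This requires care with the sign conventions of the Laplacian $L=D-A$ and with the fact that the leader enters only through $B_0$, so that no residual term in $x^0_p$ survives. I would check this by expanding row $i$, which gives $-(\nu_1 d_i+\nu_2 b_{i0})\epsilon^i_p+\nu_1\sum_j a_{ij}\epsilon^j_p$, exactly row $i$ of $-M\epsilon_p$. A secondary point is justifying ${sv}_*>0$ when $\Sigma$ is an infinite family. Here I would invoke compactness of the admissible parameter set, or note that $\Sigma$ is finite for a fixed graph, so the infimum is a minimum of positive numbers.
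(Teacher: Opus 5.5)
Your proposal is correct and follows the same argument as the paper's proof. You stack the synchronization errors as $e_p=-((\nu_1L+\nu_2B_0)\otimes I_d)\epsilon_p$, invert via $\sigma_{\min}$, bound $e_p$ by $\bar e_p$ plus the Lemma~\ref{lem:zbound} margins, and take the $\limsup$; your explicit $\otimes I_d$ factor and the stacked-vector triangle inequality are slightly more careful than the paper's per-block version, but the argument is the same.
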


\begin{proof}
Stacking~\eqref{eq:local_syn_err} yields $e_p = -(\nu_1 L+\nu_2 B_0)q_p$,
where $q_p$ collects $(x^i_p-\psi^i_p)-(x^0_p-\psi^0_p)$ and $e_p$
collects $e^i_p$.  Since $\nu_1 L+\nu_2 B_0$ is nonsingular under
Assumption~\ref{ass:graph},
$\|q_p\|_2\leq\sigma_{\min}^{-1}(\nu_1 L+\nu_2 B_0)\|e_p\|_2$.
Using $\|e^i_p\|_2\leq\|\bar{e}^i_p\|_2+\bar{z}^i$ from
Lemma~\ref{lem:zbound}, $\|\bar{e}^i_p\|_2\to 0$, and taking
$\limsup$ gives~\eqref{eq:globalbound}.
\end{proof}

\begin{remark}
The
bound~\eqref{eq:globalbound} is entirely determined by $\{r_i\}$ and
the fixed graph structure; tighter tube radii from the feedforward
directly reduce the formation error bound.
\end{remark}

\subsection{Tightened eCBF Functions}
Let nominal dynamics be control-affine $\dot{\bar{x}}^i=F_x(\bar{x}^i,t)+F_u\bar{v}^i$. The support-function tightening mechanism of~\cite{koulong2025wc}
is applied to both inter-agent avoidance and obstacle collision
avoidance, respectively:
\begin{multline}\label{eq:eCBF_col_final}
\hspace{-9pt} \Phi_{ij}^{\text{tight}}\big(\bar{x}^{i}, \bar{x}^{j}, \bar{v}^{i};\sigma\big) =L_{F_{x}}^{r} h_{ij}(\bar{x}^{i},\bar{x}^{j}) + \sum_{q=1}^{r-1} \kappa_q L_{F_{x}}^{q} h_{ij}(\bar{x}^{i},\bar{x}^{j}) \\  \hspace{9pt} + \big(L_{F_{u}} L_{F_{x}}^{r-1} h_{ij}(\bar{x}^{i},\bar{x}^{j})\big) \bar{v}^{i}
+\kappa_0 \big( h_{ij} (\bar x^i,\bar x^j) -\delta_{ij}(\bar{x}^{i},\bar{x}^{j})\big),
\end{multline}
\begin{multline}\label{eq:eCBF_obs_final}
\Phi_{iO}^{\text{tight}}\big(\bar{x}^{i}, \bar{v}^{i};\sigma\big) = L_{F_{x}}^r h_{iO}(\bar{x}^{i}) + \sum_{q=1}^{r-1} \kappa_q L_{F_{x}}^{q} h_{iO}(\bar{x}^{i}) 
\\
+ \big(L_{F_{u}} L_{F_{x}}^{r-1} h_{iO}(\bar{x}^{i})\big) \bar{v}^{i} + \kappa_0 \big( h_{iO} (\bar x^i) - \delta_{iO}(\bar{x}^{i})\big),
\end{multline}
where \\ $\delta_{ij}(\bar{x}^{i},\bar{x}^{j}):= r_i\|\nabla_{\bar{x}^i}h_{ij}(\bar{x}^{i},\bar{x}^{j})\|_2+r_j\|\nabla_{\bar{x}^j}h_{ij}(\bar{x}^{i},\bar{x}^{j})\|_2$ and $\delta_{iO}(\bar{x}^{i}):=  r_i\|\nabla_{\bar{x}^i}h_{iO}(\bar{x}^{i})\|_2$.
The term $r_j\|\nabla_{\bar{x}^j}h_{ij}\|_2$ accounts for the
uncertainty in agent $j$'s true state relative to its nominal
prediction $\hat{x}^j$.  It uses the scalar $r_j$ received from
neighbor $j$ at initialization; no runtime communication is needed.

\begin{remark}
The obstacle margin $\delta_{iO}$ and the safety certificate
(nominal satisfaction $\Rightarrow$ true trajectory safety) are
established in~\cite[Thm.~3.1]{koulong2025wc}. The safety guarantee for
\eqref{eq:eCBF_col_final}--\eqref{eq:eCBF_obs_final} follows by applying
\cite[Thm.~3.1]{koulong2025wc} with $\|\delta x^i_p\|\leq r_i$
and $\|\delta x^j_p\|\leq r_j$ simultaneously.
\end{remark}

\begin{theorem}[Robust Formation Tracking and Safety]\label{thm:main}
Consider the system~\eqref{eq:ct_follower}--\eqref{eq:ct_leader} under
Assumptions~\ref{ass:graph}--\ref{ass:leader_model} and the control
law~\eqref{eq:feedfwd_input}--\eqref{eq:ancillary_control}.  Suppose:
\begin{enumerate}
  \item $K^i_p$ makes $A^i_p$ Hurwitz and $r_i$
    satisfies~\eqref{eq:ri};
  \item Nominal errors satisfy $\|\bar{e}^i(t)\|_2\to 0$
    (Assumption~\ref{ass:terminal});
  \item The nominal planner enforces~\eqref{eq:eCBF_col_final}
    and~\eqref{eq:eCBF_obs_final} with margins $\delta_{iO}$,
    $\delta_{ij}$;
  \item Initial conditions satisfy $\delta x^i(t_0)\in\mathcal{Z}^i$,
    $h_{ij}(x^i(t_0),x^j(t_0))\geq 0$, $h_{iO}(x^i(t_0))\geq 0$.
\end{enumerate}
Then for all $t\geq t_0$:
\begin{enumerate}
  \item $\|\delta x^i(t)\|\leq r_i$;
  \item the global formation error bound~\eqref{eq:globalbound} holds;
  \item $h_{ij}(x^i(t),x^j(t))\geq 0$ and $h_{iO}(x^i(t))\geq 0$.
\end{enumerate}
\end{theorem}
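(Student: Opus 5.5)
The plan is to assemble Theorem~\ref{thm:main} from the three earlier results, one conclusion at a time. Each hypothesis feeds exactly one previously established certificate. The only real work is checking that these certificates are compatible when applied simultaneously across all agents.

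\textbf{Conclusion 1 (tube bound).} By hypothesis 1, $A^i_p$ is Hurwitz and $r_i$ satisfies \eqref{eq:ri}. Proposition~\ref{prop:tube} then shows that $\mathcal{Z}^i$ is RPI for the deviation dynamics \eqref{eq:devdyn}. The lumped disturbance is bounded by $\bar w^i_{\mathrm{eff}}$, which uses the leader tube bound $\|\delta x^0\|\le\bar r_0$ from \eqref{eq:r0}. For relayed agents, the estimate mismatch $L_0e^0_i$ is absorbed into $\bar w^i$ as stated after Assumption~\ref{ass:leader_model}. Since hypothesis 4 gives $\delta x^i(t_0)\in\mathcal{Z}^i$, invariance yields $(\delta x^i)^\top P_i\delta x^i\le r_i^2$ for all $t\ge t_0$, and hence $\|\delta x^i(t)\|\le r^i_{\mathrm{ball}}$.

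Here there is a normalization issue. The theorem states $\|\delta x^i\|\le r_i$, whereas the proposition gives $r_i/\sqrt{\lambda_{\min}(P_i)}$. I would therefore either assume the scaling $\lambda_{\min}(P_i)\ge 1$, or read $r_i$ in conclusion 1, in Lemma~\ref{lem:zbound} and in $\delta_{ij},\delta_{iO}$ as the ball radius $r^i_{\mathrm{ball}}$. The second reading is the consistent one, and I would flag it explicitly. I would also note that $\|\delta x^i_p\|\le\|\delta x^i\|$, which is what Lemma~\ref{lem:zbound} and the tightening margins require.

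\textbf{Conclusion 2 (formation bound).} Conclusion 1 holds for every agent $j$ and for the leader at once. Lemma~\ref{lem:zbound} therefore gives $\|e^i_p-\bar e^i_p\|\le\bar z^i$ for all $t\ge t_0$. Hypothesis 2 supplies $\|\bar e^i\|\to0$. Assumption~\ref{ass:graph} gives that $\nu_1L+\nu_2B_0$ is nonsingular, so $sv_*>0$. Theorem~\ref{thm:formation} then applies verbatim and yields \eqref{eq:globalbound}.

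\textbf{Conclusion 3 (safety).} This is the step I expect to be the main obstacle, because it is the only one not fully proved in this paper. I would argue as follows. By hypothesis 3, the nominal planner keeps $\Phi^{\text{tight}}\ge0$. Via the exponential CBF chain, this gives forward invariance of the tightened nominal sets $h_{ij}(\bar x^i,\bar x^j)\ge\delta_{ij}$ and $h_{iO}(\bar x^i)\ge\delta_{iO}$, given compatible initial conditions. I would first try to check that hypothesis 4 together with tube membership suffices for this initialization; it may instead require the tightened condition at $t_0$. Then a first-order (mean-value) expansion gives
\[
h_{ij}(x^i,x^j)\ge h_{ij}(\bar x^i,\bar x^j)-\|\nabla_{\bar x^i}h_{ij}\|\,\|\delta x^i\|-\|\nabla_{\bar x^j}h_{ij}\|\,\|\delta x^j\|\ge h_{ij}-\delta_{ij}\ge0,
\]
using conclusion 1 for agents $i$ and $j$ simultaneously, and similarly for obstacles. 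This is precisely the content of \cite[Thm.~3.1]{koulong2025wc}, invoked as in the preceding remark. The delicate points are twofold. First, the gradient should be evaluated along the segment rather than at the nominal point, so I would appeal to the support-function or convexity argument of the cited theorem rather than re-derive it. Second, neighbor $j$'s margin uses the broadcast scalar $r_j$, which is valid because $r_j$ certifies $j$'s own tube by conclusion 1.
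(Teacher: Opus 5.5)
Your proposal follows the paper's proof. The paper disposes of Claim~1 by Proposition~\ref{prop:tube}, of Claim~2 by Theorem~\ref{thm:formation} (which rests on Lemma~\ref{lem:zbound}), and of Claim~3 by invoking \cite[Thm.~3.1]{koulong2025wc} with the margins $\delta_{ij},\delta_{iO}$ and the tube bounds of agents $i$ and $j$ applied simultaneously. The caveats you raise are fair points about the paper rather than gaps in your argument: its one-line proof silently identifies $r_i$ with $r^i_{\mathrm{ball}}=r_i/\sqrt{\lambda_{\min}(P_i)}$ in Claim~1, Lemma~\ref{lem:zbound} and the margins, and it leaves the tightened nominal initialization (which the true-state condition in hypothesis~4 does not by itself supply) entirely to the cited theorem.
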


\begin{proof}
Claim~1 follows from Proposition~\ref{prop:tube}.  Claim~2 follows
from Theorem~\ref{thm:formation}.  Claim~3: since
$\|\delta x^i_p\|\leq r_i$ by claim~1, the nominal planner satisfying
\eqref{eq:eCBF_col_final}--\eqref{eq:eCBF_obs_final} implies true trajectory safety
by~\cite[Thm.~3.1]{koulong2025wc} applied to each constraint
with margins $\delta_{iO}$ and $\delta_{ij}$.
\end{proof}

\section{MPC Implementation}\label{sec:MainResult}

\subsection{Discretization and Information Exchange}
The continuous-time dynamics~\eqref{eq:ct_follower}--\eqref{eq:ct_leader}
are integrated over each sampling period $T_s$ using a fourth-order
Runge--Kutta scheme.  The resulting distributed MPC architecture
follows the non-iterative plan exchange mechanism of~\cite{DunbarMurray2006}: at each sampling instant
$t_k=kT_s$, all agents solve their local OCPs in parallel using the
previous sampling instant's broadcast from neighbors, then each agent
broadcasts its newly computed nominal input plan to direct neighbors,
and finally all agents apply their control inputs.  This eliminates
any need for iterative agent-to-agent negotiation and ensures no
circular dependency between agents at the same timestep.

\begin{assumption}[Plan exchange]\label{ass:plan}
Following~\cite{DunbarMurray2006}, after solving its local OCP
at sampling instant $t_k$, each agent $j$ broadcasts its nominal
input plan $\Pi^j(t_k):=\{\bar{v}^{j*}_{l|t_k}\}_{l=0}^{H-1}$ to all
direct neighbors $i\in\{m:j\in\mathcal{N}_m\}$ over the fixed graph
$\mathcal{G}$.  Agent $i$ uses the most recently received plan
$\Pi^j(t_{k-1})$ --- the plan from the previous sampling instant ---
when solving its own OCP at $t_k$.  
\end{assumption}

The one-step lag between $\Pi^j(t_{k-1})$ and the current optimal
$\Pi^j(t_k)$ introduces a plan-reconstruction mismatch (analytical quantity used in the theoretical analysis)
$\tilde{v}^j_k := \bar{v}^j_{l|t_k} - \bar{v}^j_{l|t_{k-1}}$, which
is bounded by how much the optimal plan changes between consecutive
sampling instants.  As shown in~\cite{DunbarMurray2006}, this
mismatch is bounded under Lipschitz continuity of the value
function and is accounted for in the tube radius $r_i$
via~\eqref{eq:weff}.

\subsection{Neighbor Prediction Rollout}\label{subsec:neighborpred}

Agent $i$ uses the received plan $\Pi^j(t_{k-1})$ to reconstruct a
time-shifted nominal input sequence $\hat{v}^j_{l|t_k}$ by dropping
the first element of $\Pi^j(t_{k-1}) = \{\bar{v}^{j*}_{l|t_{k-1}}\}_{l=0}^{H-1}$ (it was applied at $t_{k-1}$) and appending the terminal law
$-\hat{K}^j\bar{e}^j_{H|t_{k-1}}$:
\[
\hat{v}^j_{l|t_k} := \begin{cases} \bar{v}^{j*}_{l+1|t_{k-1}}, & l = 0,\ldots,H-2 \\ -\hat{K}^j\bar{e}^j_{H|t_{k-1}}, & l = H-1 
\end{cases}
\]
which is fed into agent $i$'s OCP as an exogenous fixed signal representing what neighbor $j$ is expected to do. It drives the formation error prediction $\bar{e}^i_{l+1|t_k}$ and the inter-agent eCBF constraint $\Phi^{\mathrm{tight}}_{ij}$.

The corresponding nominal state
prediction $\hat{x}^j_{l|t_k}$ is obtained by rolling out
$\hat{v}^j_{l|t_k}$ forward from the current measured state $x^j(t_k)$
using the nominal dynamics of agent $j$.  This rollout requires agent
$i$ to know agent $j$'s dynamics model $f^j$, which is assumed known
at design time for all direct neighbors, consistent with the formation
setting where the vehicle types are specified before deployment.

\subsection{Applied Control and OCP}

The applied control law is:
\begin{equation}
  u^i(t)=\bar{v}^i(t_k)-K^i_p\delta x^i(t)+f^0(\hat x^0,t)-f^i(x^i,t),
  \label{eq:control}
\end{equation}
where $\bar{v}^i\in\mathcal{V}^i:=\mathcal{U}^i\ominus
K^i_p\mathcal{Z}^i$ is the nominal MPC input held constant over
$[t_k,t_{k+1})$, while $-K^i_p\delta x^i(t)$ and the feedforward
are applied continuously.  The optimizer propagates $\bar{e}^i$
(formation error space) for the cost and $\bar{x}^i$ (state space)
for eCBF evaluation.

\begin{assumption}[Terminal Ingredients]\label{ass:terminal}
There exist $P_r\succ 0$, terminal gain $\hat{K}^i$, and terminal set
$\mathcal{X}_{f,i}$ such that $\bar{v}^i=-\hat{K}^i\bar{e}^i$ renders
$\mathcal{X}_{f,i}$ positively invariant within $\mathcal{V}^i$,
satisfies~\eqref{eq:eCBF_col_final}--\eqref{eq:eCBF_obs_final}, and drives
$\|\bar{e}^i\|_2\to 0$.  Such terminal ingredients exist for the
linearized error dynamics and are computed offline.
\end{assumption}

The local OCP for agent $i$ at sampling instant $t_k$ is:
\begin{align}
\hspace{-9em} \min_{\{\bar{v}^{i}_{l|t_k}\}} \, &
\sum_{l=0}^{H-1}\Big[\|\bar{e}^{i}_{l|t_k}\|_{Q_r}^2 + \|\bar{v}^{i}_{l|t_k}\|_R^2
+ \|\Delta \bar{v}^{i}_{l|t_k}\|_{R_\Delta}^2\Big] + \|\bar{e}^{i}_{H|t_k}\|_{P_r}^2
\nonumber\\
\text{s.t.}\quad &
\bar{e}^{i}_{0|t_k} = e^{i}(t), \qquad \bar{x}^{i}_{0|t_k} = \bar{x}^{i}(t), \nonumber\\
& \bar{e}^{i}_{l+1|t_k} = \Phi_{\mathrm{RK4}}^{(e)}\!\Big(\bar{e}^{i}_{l|t_k},\,\bar{v}^{i}_{l|t_k},\,\{\hat v^j_{l|t_k}\}_{j\in\mathcal N_i^{\sigma(t)}},\,\sigma(t),\,T_s\Big),
\nonumber\\
& \bar{x}^{i}_{l+1|t_k} = \Phi_{\mathrm{RK4}}^{(x)}\!\Big(\bar{x}^{i}_{l|t_k},\,\bar{v}^{i}_{l|t_k}+\hat f^0_{l|t_k}
,\,T_s\Big),
\nonumber\\
& (\bar e^i_{l|t_k},\bar v^i_{l|t_k})\in(\mathcal C_i\ominus\mathcal Z_i)\times\mathbb V,
\nonumber\\
& \Phi_{ij}^{\mathrm{tight}}\!\Big(\bar{x}^{i}_{l|t_k},\,\hat{x}^{j}_{l|t_k},\,\bar{v}^{i}_{l|t_k};\,\sigma(t)\Big)\ge 0,\quad \forall j\in\mathcal N_i^{\sigma(t)},
\nonumber\\
& \Phi_{iO}^{\mathrm{tight}}\!\Big(\bar{x}^{i}_{l|t_k},\,\bar{v}^{i}_{l|t_k};\,\sigma(t)\Big)\ge 0,\quad \forall O\in\mathcal O^{\mathrm{act}},
\nonumber\\
& \bar{e}^{i}_{H|t_k}\in\mathcal X_{f,i},\qquad \bar{v}^{i}_{H-1|t_k}=-\hat{K}^i\bar e^i_{H|t_k}.
\label{eq:ocp}
\end{align}
$\bar{v}^i$ is the variable; $\bar{v}^{i*}$ is its optimal value after the OCP is solved. After the OCP is solved, $\bar{v}^{i*}$ is a specific sequence of numbers. The first element $\bar{v}^{i*}_{0|t_k}$ is applied as the held nominal input over $[t_k, t_{k+1})$. The full sequence $\{\bar{v}^{i*}_{l|t_k}\}_{l=0}^{H-1}$ is broadcast to neighbors as $\Pi^i(t_k)$. The reconstructed neighbor input prediction $\{\hat v^j_{l|t_k}\}$ is agent $i$'s best available prediction of what agent $j$ intends to do over the current horizon, reconstructed from the most recently received broadcast. It is constructed locally by agent $i$ from the received broadcast. Agent $j$ never sends $\hat{v}^j$ directly — agent $i$ constructs it as explained in Section~\ref{subsec:neighborpred}.

The tube $\mathcal{Z}^i$, margins $\delta_{iO}$ and $\delta_{ij}$, and
the scalar $r_j$ from each neighbor are all computed at initialization
and fixed thereafter, adding no per-timestep overhead beyond the OCP
solve and the neighbor plan broadcast of
$\Pi^j(t_k):=\{\bar{v}^j_{l|t_k}\}_{l=0}^{H-1}$.

\subsection{Algorithm}

Algorithm~\ref{alg:dmpc} summarizes the complete procedure for agent
$i$, following the non-iterative structure of~\cite{DunbarMurray2006}.

\begin{algorithm}[t]
\caption{Distributed Tube MPC with Tightened eCBFs for Follower $i$}
\label{alg:dmpc}
\small
\textbf{Data:} $x^i(t_0)$, $\{x^j(t_0)\}_{j\in\mathcal{N}_i}$,
$x^0(t_0)$ if $b_{i0}=0$ or $\hat{x}^0(t_0)$ if $b_{i0}>0$, $T_s$, $H$,
$f^i$, $f^0$, $\{f^j\}_{j\in\mathcal{N}_i}$,
$K^i_p$, $Q_i$, $P_i$, $\bar{w}^i$, $\psi^i$.
\medskip

\textbf{Offline design (once before deployment):}
\begin{algorithmic}[1]
  \State Compute leader tube radius $\bar{r}_0$ via~\eqref{eq:r0}.
  \State Compute effective disturbance bound
         $\bar{w}^i_{\mathrm{eff}} \leftarrow \bar{w}^i + L_0\bar{r}_0$
         via~\eqref{eq:weff}.
  \State Compute per-agent tube radius $r_i$ via~\eqref{eq:ri}
         and tube $\mathcal{Z}^i$.
  \State Compute tightened input set
         $\mathcal{V}^i \leftarrow \mathcal{U}^i \ominus K^i_p\mathcal{Z}^i$.
  \State Compute terminal ingredients
         $\hat{K}^i$, $\mathcal{X}_{f,i}$, $P_r$
         (Assumption~\ref{ass:terminal}).
\end{algorithmic}
\medskip

\textbf{Initialization at $t_0$ (one-time broadcast):}
\begin{algorithmic}[1]
  \State Broadcast $r_i$ to all $j$ such that $i\in\mathcal{N}_j$;
         receive $r_j$ from each $j\in\mathcal{N}_i$.
  \State For each $j\in\mathcal{N}_i$, form inter-agent eCBF margin
         $\delta_{ij} \leftarrow
         r_i\|\nabla_{\bar{x}^i}h_{ij}\|_2
         + r_j\|\nabla_{\bar{x}^j}h_{ij}\|_2$.
  \State Solve OCP~\eqref{eq:ocp} at $t_0$ with
         $\hat{v}^j_{l|t_0}=0$ for all $j\in\mathcal{N}_i$,
         $l=0,\ldots,H-1$ (zero-input initialization), $\hat{x}^j_{l|t_0}=x^j(t_0)$
         for all $j\in\mathcal{N}_i$, $l=0,\ldots,H-1$, and with
         the inter-agent eCBF constraint $\Phi^{\mathrm{tight}}_{ij}\geq 0$
         removed (obstacle avoidance $\Phi^{\mathrm{tight}}_{iO}\geq 0$
         is still enforced).
  \State Broadcast initial plan
         $\Pi^i(t_0)\leftarrow\{\bar{v}^{i*}_{l|t_0}\}_{l=0}^{H-1}$
         to all $j$ s.t. $i\in\mathcal{N}_j$.
\end{algorithmic}
\medskip

\textbf{At each sampling instant $t_k$, $k=1,2,\ldots$:}
\begin{algorithmic}[1]
  \State \textbf{Measure:} obtain $x^i(t_k)$ and
         $\{x^j(t_k)\}_{j\in\mathcal{N}_i}$.
  \State \textbf{Leader state:} obtain $x^0(t_k)$ directly if
         $b_{i0}>0$, or update relayed estimate $\hat{x}^0(t_k)$
         via Assumption~\ref{ass:leader_model}.
  \State \textbf{Predict neighbors:} for each $j\in\mathcal{N}_i$,
         reconstruct $\hat{v}^j_{l|t_k}$ from $\Pi^j(t_{k-1})$ by
         dropping the first element and appending
         $-\hat{K}^j\bar{e}^j_{H|t_{k-1}}$; roll out $\hat{x}^j_{l|t_k}$
         from $x^j(t_k)$ using $f^j$ and $\hat{v}^j_{l|t_k}$.
  \State \textbf{Solve OCP:} solve~\eqref{eq:ocp} using
         $\{\hat{v}^j_{l|t_k}\}_{j\in\mathcal{N}_i}$,
         $x^0(t_k)$ (or $\hat{x}^0_i(t_k)$), and tightened constraints
         $\mathcal{Z}^i$, $\delta_{ij}$, $\delta_{iO}$;
         obtain $\{\bar{v}^{i*}_{l|t_k}\}_{l=0}^{H-1}$.
  \State \textbf{Broadcast:} transmit updated plan
         $\Pi^i(t_k)\leftarrow\{\bar{v}^{i*}_{l|t_k}\}_{l=0}^{H-1}$
         to all $j$ such that $i\in\mathcal{N}_j$.
  \State \textbf{Apply:} over $[t_k, t_{k+1})$, apply:
  \[
    u^i(t) = \bar{v}^{i*}_{0|t_k}
            - K^i_p\bigl(x^i(t)-\bar{x}^i(t)\bigr)
            + f^0(\hat x^0(t),t) - f^i(x^i(t),t).
  \]
  where $\hat{x}^0(t)=x^0(t)$ if $b_{i0}>0$, else (Assumption~\ref{ass:leader_model})
\end{algorithmic}
\end{algorithm}

\section{Numerical Example}\label{sec:NUMERICALEXAMPLE}  

We simulate one leader and five followers with distinct heterogeneous nonlinear dynamics and bounded external disturbances over a fixed communication graph similar to the numerical example with parameters $n=3$, $d=2$, $N=5$, $T_s=0.1$\,s, $H=5$, simulation time $30$\,s studied in \cite{koulong2025wc} with the exception that, here, the leader has additional matched time-varying bounded disturbance. Two circular obstacles are placed at $[1,1]$ and $[-1.5,0.5]$ with radii $0.50$ and $0.65$ (plus $0.15$ inflation). The leader starts at $[3,0,0,0,0,0]^\top$ and followers at $[1.0, -0.5, \text{zeros}(1,4)]$,~$[-0.8, 0.4, \text{zeros}(1,4)]$, \\~$[0.6, 0.6, \text{zeros}(1,4)]$, $[-0.5,  -0.7, \text{zeros}(1,4)]$, \\and~$[0.7, 0.0, \text{zeros}(1,4)]$. Formation offsets are $\psi=[[-9;2], [-6;2], [0;2], [6;2], [9;2]]$.
The objective is to compare the results using the tightened safety constrains from \cite{koulong2025wc} with this current proposed framework.

\begin{figure}[htbp]
    \centering
    \includegraphics[width=2.5in]
    {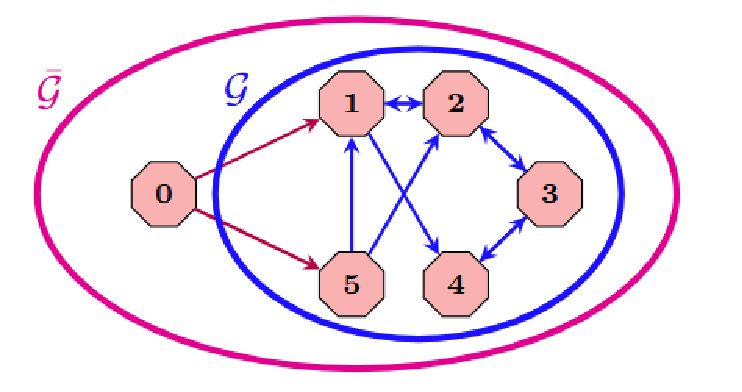}
    \caption{The considered communication topology \({\mathcal{G}}\) and the augmented graph \(\bar{\mathcal{G}}\) for the example in Section~\ref{sec:NUMERICALEXAMPLE}.}
    \label{fig:1}
\end{figure}

\begin{figure}[htbp]
    \centering
        \begin{subfigure}[b]{0.410\textwidth}
    \centering
    \includegraphics[width=2.9in]
    {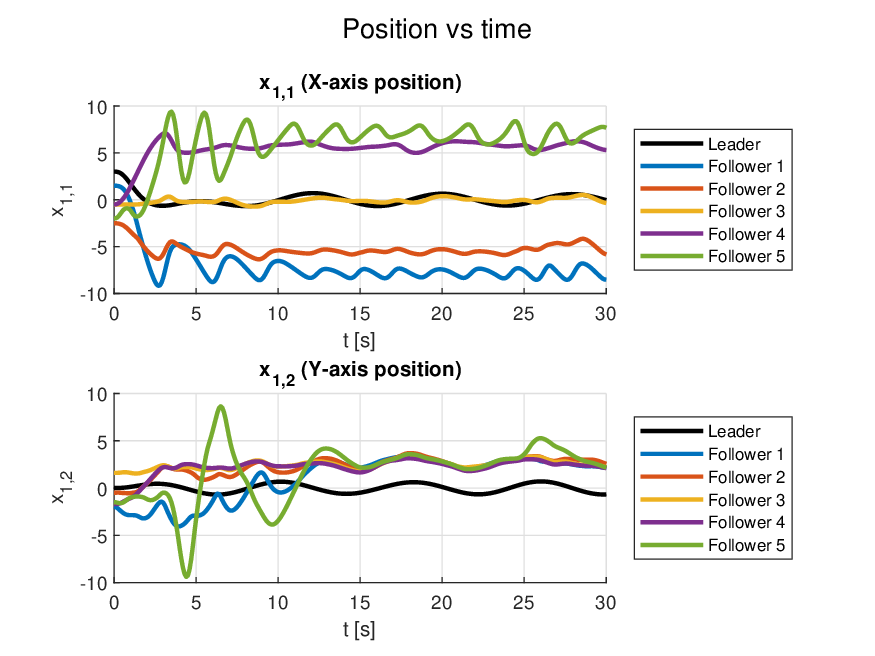}
    \caption{Leader and follower agents $[x^i_{1,1},x^i_{1,2},t]$ position components using the tightened safety constraints of \cite{koulong2025wc}}
    \label{fig:2}
    \end{subfigure}
    \hfill
    \begin{subfigure}[b]{0.410\textwidth}
    \centering
    \includegraphics[width=2.9in]{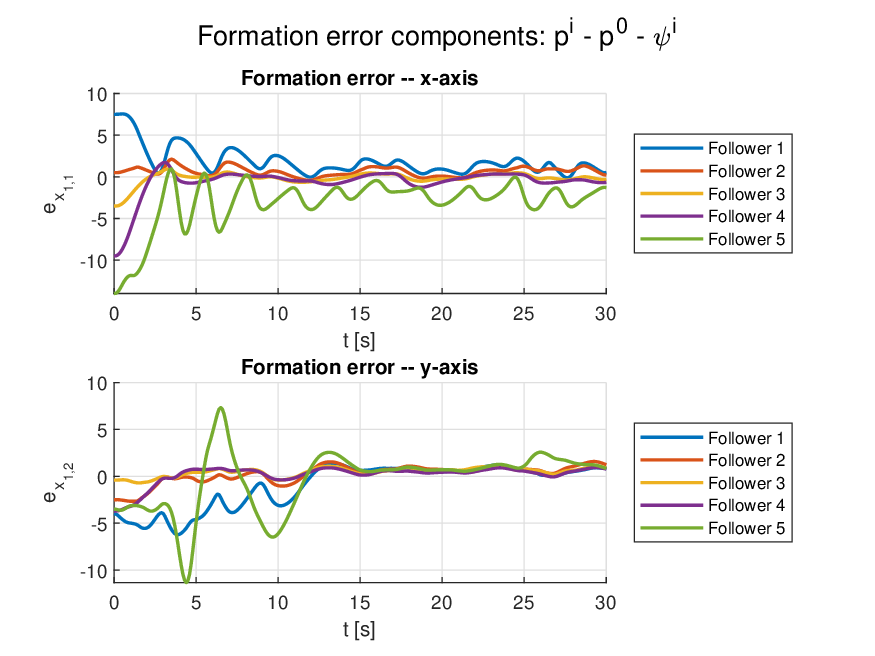}
    \caption{Leader and follower agents $[x_{1,1},x_{1,2}]$ position error components using the tightened safety constraints of \cite{koulong2025wc}}
    \label{fig:4}
    \end{subfigure}
    \hfill
    \caption{Formation offset error
and position components for leader ($i=0$) and followers ($i=1,\ldots,5$)  using the tightened safety constraints of \cite{koulong2025wc}}
    \label{fig:combined1}
\end{figure}

\begin{figure}[htbp]
    \centering
    \begin{subfigure}[b]{0.410\textwidth}
    \centering
    \includegraphics[width=2.9in]{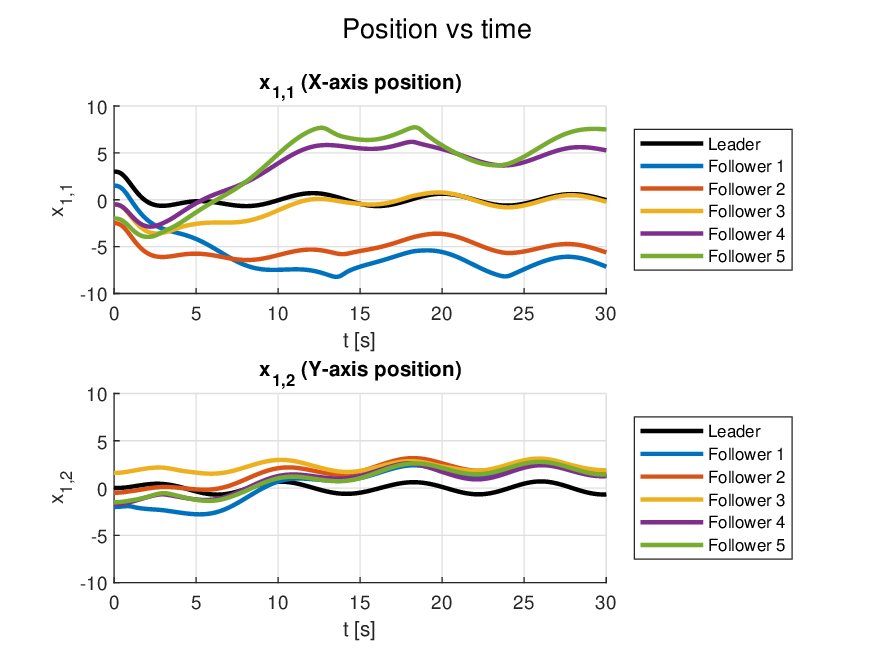}
    \caption{Leader and Follower Agents $[x^i_{1,1},x^i_{1,2},t]$}
    \label{fig:3}
    \end{subfigure}
    \hfill
    \begin{subfigure}[b]{0.410\textwidth}
    \centering
    \includegraphics[width=2.9in]{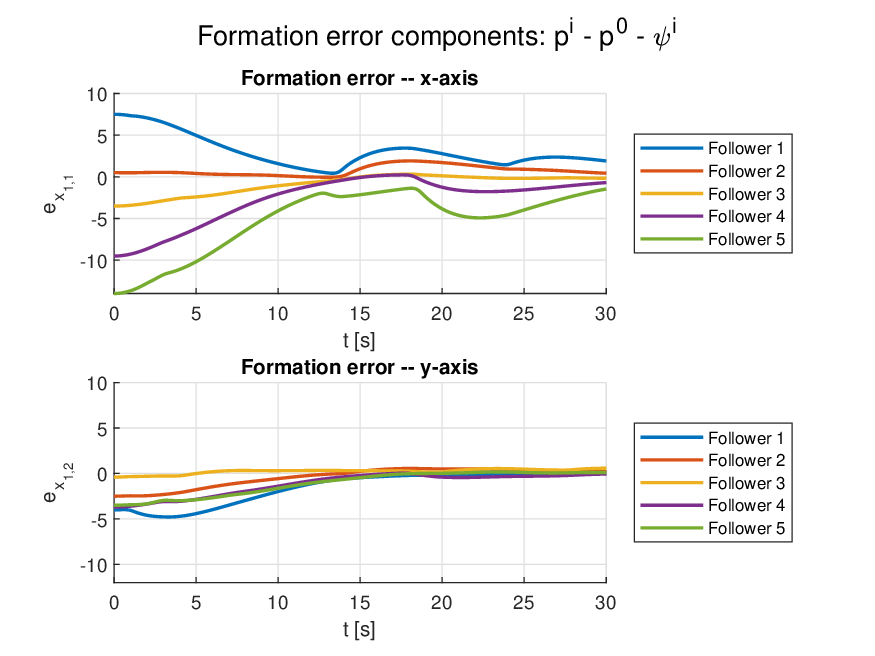}
    \caption{Leader and Follower Agents $[x_{1,1},x_{1,2}]$ Position Error Components}
    \label{fig:5}
    \end{subfigure}
    \hfill
    \caption{Formation offset error
and position components for leader ($i=0$) and followers ($i=1,\ldots,5$)} 
    \label{fig:combined0}
\end{figure}

\begin{figure}[htbp]
    \centering
    \begin{subfigure}[b]{0.410\textwidth}
    \centering
    \includegraphics[width=2.9in]
    {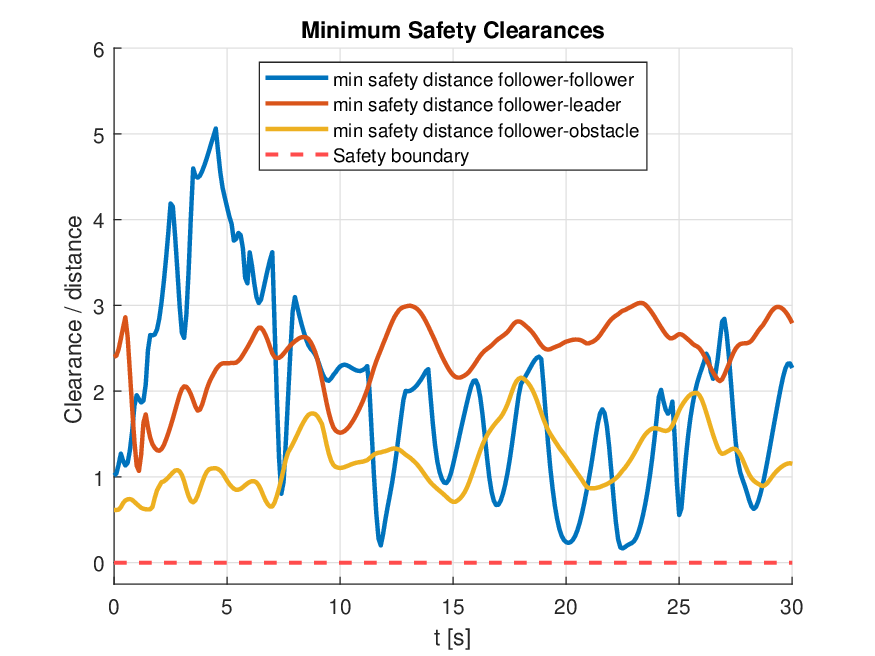}
    \caption{Minimum clearance from \cite{koulong2025wc}}
    \label{fig:2x}
    \end{subfigure}
    \hfill
    \begin{subfigure}[b]{0.410\textwidth}
    \centering
    \includegraphics[width=2.9in]{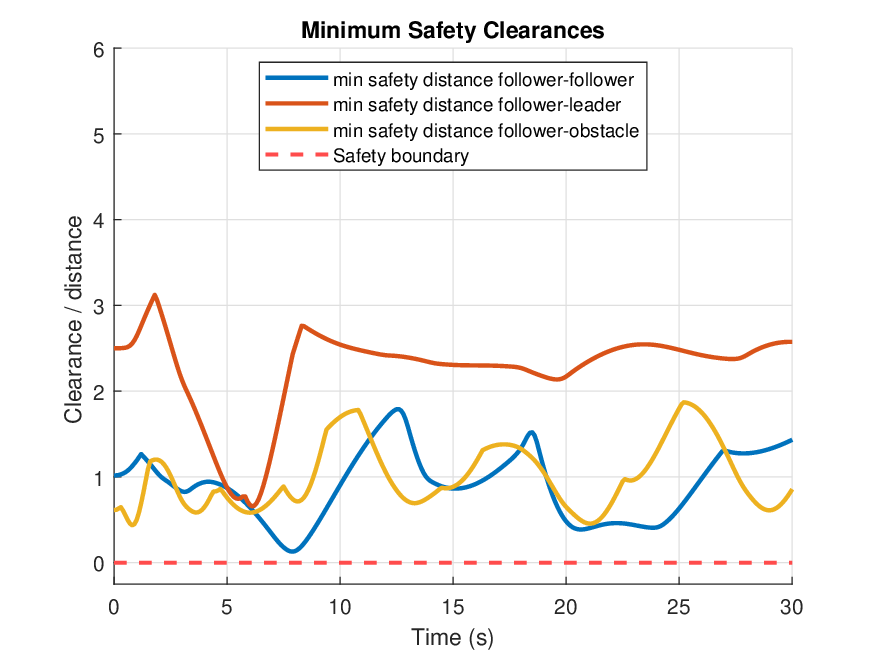}
    \caption{Minimum clearance}
    \label{fig:3x}
    \end{subfigure}
    \caption{Minimum Safety Clearance} 
    \label{fig:combined0x}
\end{figure}

\scalebox{.64}{
\small
\setlength{\tabcolsep}{6pt}
\renewcommand{\arraystretch}{1.1} 
\hspace{-19pt}    \begin{tabular}{|c|c|}
\hline 
Agent  & $\dot{x}_{1,k}^{i} = x_{2,k}^{i}$,
$\dot{x}_{2,k}^{i} = x_{3,k}^{i}$, $k=1,2$, with $\dot{x}_{3,k}^{i} = f^i_k(x^i) + u^i_k + w^i_k$ presented below:  \tabularnewline
\hline 
\hline 
Leader & $\begin{aligned}
    f^0_1(x^0) &= -\,5\,\bigl(x_{3,1}^{0}\;+\;0.36\,x_{1,1}^{0}\;+\;0.84\,x_{2,1}^{0}\; +\; 0.15\,(x_{1,1}^{0})^3\; -\;0.4\,\sin\bigl(0.8\,t)\bigr),\\
f^0_2(x^0) &= -\,5\,\bigl(x_{3,2}^{0}\;+\;0.36\,x_{1,2}^{0}\;+\;0.84\,x_{2,2}^{0}\; \; 0.15\,(x_{1,2}^{0})^3\; -\;0.4\,\sin\bigl(0.8\,t + \frac{\pi}{2})\bigr),
\\
w_1^3(t) &= 0.20\,\sin(0.9\,t), \, w_2^3(t) = 0.25\,\sin\!\big(1.1\,t-\tfrac{\pi}{7}\big);\,\, u^0_1=u^0_2=0
\end{aligned}$ \tabularnewline
\hline 
Agent 1 & $\begin{aligned}
    f^1_1(x^1) = -k_a^1\!\big(x_{3,1}^1 + 0.49\,x_{1,1}^1 + 1.12\,x_{2,1}^1 + 0.12\,(x_{1,1}^1)^3   - 0.25\,\tanh(0.6\,x_{1,1}^1)\big), \\
    f^1_2(x^1) = -k_a^1\!\big(x_{3,2}^1 + 0.49\,x_{1,2}^1 + 1.12\,x_{2,2}^1 + 0.12\,(x_{1,2}^1)^3   - 0.25\,\tanh(0.6\,x_{1,2}^1)\big), \\
    w_1^1(t) = 0.20\,\sin(0.9\,t), \, w_2^1(t) = 0.15\,\sin\!\big(1.1\,t+\tfrac{\pi}{7}\big).

\end{aligned}$ \tabularnewline
\hline 
Agent 2 & $\begin{aligned}
    f^2_1(x^2) = -k_a^2\!\big(x_{3,1}^2 +0.36\,x_{1,1}^2 + 0.84\,x_{2,1}^2 + 0.18\,(x_{1,1}^2)^3  - 0.15\,(x_{1,2}^2)^2\,\tanh(x_{1,1}^2)\big), \\
f^2_2(x^2) = -k_a^2\!\big(x_{3,2}^2+0.36\,x_{1,2}^2 + 0.84\,x_{2,2}^2 + 0.18\,(x_{1,2}^2)^3   - 0.15\,(x_{1,1}^2)^2\,\tanh(x_{1,2}^2)\big), 
\\
w_1^2(t) = 0.18\,\sin(1.1\,t+0.3), \, w_2^2(t) = 0.18\,\sin\!\big(0.8\,t-\tfrac{\pi}{5}\big).
\end{aligned}$ \tabularnewline
\hline 
Agent 3 & $\begin{aligned}
    f^3_1(x^3) = -k_a^3\!\big(x_{3,1}^3+0.25\,\tanh(x_{1,1}^3) + 0.9\,\tanh(x_{2,1}^3) - 0.20\,\sin(0.7\,t)\big), \\
f^3_2(x^3) = -k_a^3\!\big(x_{3,2}^3+0.25\,\tanh(x_{1,2}^3) + 0.9\,\tanh(x_{2,2}^3)  - 0.20\,\sin(0.9\,t+\tfrac{\pi}{3})\big), 
\\
w_1^3(t) = 0.18\,\sin(1.1\,t+0.3), \, w_2^3(t) = 0.18\,\sin\!\big(0.8\,t-\tfrac{\pi}{5}\big).
\end{aligned}$ \tabularnewline
\hline 
Agent 4 & $\begin{aligned}
f^4_1(x^4) = -k_a^4\!\big(x_{3,1}^4+0.4225\,x_{1,1}^4 + 0.975\,x_{2,1}^4   + 0.08\,\big(x_{1,1}^4+x_{1,2}^4\big)^3\big), 
\\
f^4_2(x^4) = -k_a^4\!\big(x_{3,2}^4+0.4225\,x_{1,2}^4 + 0.975\,x_{2,2}^4 - 0.08\,\big(x_{1,1}^4-x_{1,2}^4\big)^3\big), \\
w_1^4(t) = 0.12\,\sin(0.6t), \, w_2^4(t) = 0.12 \sin\!\big(0.6t+\tfrac{\pi}{2}\big).
\end{aligned}$ \tabularnewline
\hline 
Agent 5 & $\begin{aligned}
f^5_1(x^5) = -k_a^5\!\big(x_{3,1}^5+0.3025\,x_{1,1}^5 + 0.88\,x_{2,1}^5 + 0.15\,(x_{1,1}^5)^3    - 0.20\,\tanh(0.5\,x_{1,1}^5)\big), \\
f^5_2(x^5) = -k_a^5\!\big(x_{3,2}^5+0.3025\,x_{1,2}^5 + 0.88\,x_{2,2}^5 + 0.15\,(x_{1,2}^5)^3    - 0.20\,\tanh(0.5\,x_{1,2}^5)\big), \\
w_1^5(t) = 0.22\,\sin\!\big(0.9\,t+\tfrac{\pi}{8}\big), \, w_2^5(t) = 0.20\,\sin(1.0\,t).
\end{aligned}$  \tabularnewline
\hline 
\end{tabular}
}

\subsection*{Results Analysis}

Both methods enforce safety by tightening nominal eCBF constraints via an RPI tube; however, larger tubes shrink the nominal feasible set, forcing the planner to expend control authority on robustness rather than formation regulation.  Because the baseline \cite{koulong2025wc} provides a general-purpose safety framework, it must conservatively bound complex internal dynamics, yielding a large, restrictive tube. This explains why Figure~\ref{fig:4}, simulated under these general-purpose baseline constraints, exhibits large, persistent formation-offset errors, and why Figure~\ref{fig:2x} exhibits excessively large safety margins. The safety layer consumes a disproportionate share of the control authority, preventing the nominal formation objective from being closely recovered. 

By contrast, exploiting knowledge of dynamics yields far less conservative bounds, as evidenced by Figure~\ref{fig:combined0}, where both error components are driven much closer to zero, yielding bounded residuals that are significantly smaller than the baseline. By directly canceling undesired internal nonlinearities via feed-forward compensation rather than bounding them as general disturbances, this design fundamentally shrinks the deviation dynamics seen by the safety mechanism. Consequently, the resulting RPI tubes and tightened safe sets are far less restrictive, allowing the controller to efficiently recover the prescribed leader-relative offsets without perpetually fighting its own robustness margins.

This reduction in conservatism is further corroborated by the clearance comparison in Figure~\ref{fig:combined0x}. Figure~\ref{fig:2x}, utilizing the baseline method of \cite{koulong2025wc}, shows relatively large minimum clearances—especially for follower–follower distances—which is a direct byproduct of a highly conservative safety layer. In Figure~\ref{fig:3x}, the current work keeps all minimum clearances strictly positive, ensuring formal safety is preserved, but the enforced margins are noticeably smaller. This is the exact signature of a controller operating with mathematically tighter, less conservative safety bounds rather than weaker safety. Because the feed-forward mechanism shrinks the required safety buffer, the trajectories in Figure~\ref{fig:3} are allowed to stay much closer to the desired formation manifold. Conversely, the trajectories in Figure~\ref{fig:2} remain visibly distorted by the intrusive robust safety correction. 


\section{Conclusion}\label{sec:CONCLUSION}

Explicit, distributed safety bounds have been established for anticipative leader-follower tracking in high-order Brunovsky nonlinear multi-agent systems subject to bounded disturbances. By deriving robust forward-invariance certificates for a feedforward-augmented ancillary control policy, the proposed framework exploits the resulting deviation dynamics to entirely circumvent restrictive Lipschitz-dependent feasibility conditions. This yields explicit, closed-form tube radii that fundamentally shrink the required exponential control barrier function (eCBF) tightening margins, minimizing the conflict between robust safety enforcement and formation tracking while preserving rigorous guarantees. Furthermore, leveraging a network-based synchronization error formulation, closed-form global formation error bounds were derived that connect individual local safety tubes to collective multi-agent performance via the minimum singular value of the augmented communication graph. Future work will address switching topologies via either autonomous (i.e., event-triggered) communication switchings or controlled switchings (e.g., dwell-time conditions), as well as the consideration of communication delays, and implementations and experimental validations on multi-robot platforms.

\bibliographystyle{ieeetr}
\bibliography{Bibliographies}

@article{Oh2015UAVSurvey,
title = {A survey of multi-agent formation control},
journal = {Automatica},
volume = {53},
pages = {424-440},
year = {2015},
issn = {0005-1098},
doi = {https://doi.org/10.1016/j.automatica.2014.10.022},
url = {https://www.sciencedirect.com/science/article/pii/S0005109814004038},
author = {Kwang-Kyo Oh and Myoung-Chul Park and Hyo-Sung Ahn}
}

@article{Olfati-Saber2007ConsensusCoopControl,
  title={Consensus and Cooperation in Networked Multi-Agent Systems},
  author={Reza Olfati-Saber and J. Alex Fax and Richard M. Murray},
  journal={Proceedings of the IEEE},
  year={2007},
  volume={95},
  pages={215-233},
  url={https://api.semanticscholar.org/CorpusID:6533249}
}

@INPROCEEDINGS{Chen2016FormationRoboticsSurvey,
  author={Yang Quan Chen and Zhongmin Wang},
  booktitle={International Conference on Intelligent Robots and Systems}, 
  title={Formation control: a review and a new consideration}, 
  year={2005},
  volume={},
  number={},
  pages={3181-3186},
  doi={10.1109/IROS.2005.1545539}}

@article{DunbarMurray2006,
title = {Distributed receding horizon control for multi-vehicle formation stabilization},
journal = {Automatica},
volume = {42},
number = {4},
pages = {549-558},
year = {2006},
issn = {0005-1098},
doi = {https://doi.org/10.1016/j.automatica.2005.12.008},
url = {https://www.sciencedirect.com/science/article/pii/S0005109806000136},
author = {William B. Dunbar and Richard M. Murray}
}

@article{Dai2017DMPCFormationSurveyLike,
title = {Distributed MPC for formation of multi-agent systems with collision avoidance and obstacle avoidance},
journal = {Journal of the Franklin Institute},
volume = {354},
number = {4},
pages = {2068-2085},
year = {2017},
issn = {0016-0032},
doi = {https://doi.org/10.1016/j.jfranklin.2016.12.021},
url = {https://www.sciencedirect.com/science/article/pii/S0016003216304926},
author = {Li Dai and Qun Cao and Yuanqing Xia and Yulong Gao}
}

@article{Mayne2005RMPC,
title = {Robust model predictive control of constrained linear systems with bounded disturbances},
journal = {Automatica},
volume = {41},
number = {2},
pages = {219-224},
year = {2005},
issn = {0005-1098},
doi = {https://doi.org/10.1016/j.automatica.2004.08.019},
url = {https://www.sciencedirect.com/science/article/pii/S0005109804002870},
author = {D.Q. Mayne and M.M. Seron and S.V. Raković}
}

@article{XuAmes2015RobustCBF,
title = {Robustness of Control Barrier Functions for Safety Critical Control**This work is partially supported by the National Science Foundation Grants 1239055, 1239037 and 1239085.},
journal = {IFAC-PapersOnLine},
volume = {48},
number = {27},
pages = {54-61},
year = {2015},
note = {Analysis and Design of Hybrid Systems ADHS},
issn = {2405-8963},
doi = {https://doi.org/10.1016/j.ifacol.2015.11.152},
url = {https://www.sciencedirect.com/science/article/pii/S2405896315024106},
author = {Xiangru Xu and Paulo Tabuada and Jessy W. Grizzle and Aaron D. Ames}
}

@article{Jankovic2018RobustCBF,
title = {Robust control barrier functions for constrained stabilization of nonlinear systems},
journal = {Automatica},
volume = {96},
pages = {359-367},
year = {2018},
issn = {0005-1098},
doi = {https://doi.org/10.1016/j.automatica.2018.07.004},
url = {https://www.sciencedirect.com/science/article/pii/S0005109818303509},
author = {Mrdjan Jankovic}
}

@misc{Liu2023IterativeMPCDHOCBF,
      title={Iterative Convex Optimization for Model Predictive Control with Discrete-Time High-Order Control Barrier Functions}, 
      author={Shuo Liu and Jun Zeng and Koushil Sreenath and Calin A. Belta},
      year={2023},
      eprint={2210.04361},
      archivePrefix={arXiv},
      primaryClass={math.OC},
      url={https://arxiv.org/abs/2210.04361}, 
}

@INPROCEEDINGS{Zeng2021MPCDiscreteCBF,
  author={Zeng, Jun and Zhang, Bike and Sreenath, Koushil},
  booktitle={American Control Conference (ACC)}, 
  title={Safety-Critical Model Predictive Control with Discrete-Time Control Barrier Function}, 
  year={2021},
  volume={},
  number={},
  pages={3882-3889},
  doi={10.23919/ACC50511.2021.9483029}}

@INPROCEEDINGS{XiaoBelta2019HOCBF,
  author={Xiao, Wei and Belta, Calin},
  booktitle={IEEE 58th Conference on Decision and Control (CDC)}, 
  title={Control Barrier Functions for Systems with High Relative Degree}, 
  year={2019},
  volume={},
  number={},
  pages={474-479},
  doi={10.1109/CDC40024.2019.9029455}}

@INPROCEEDINGS{NguyenSreenath2016ECBF,
  author={Nguyen, Quan and Sreenath, Koushil},
  booktitle={American Control Conference (ACC)}, 
  title={Exponential Control Barrier Functions for enforcing high relative-degree safety-critical constraints}, 
  year={2016},
  volume={},
  number={},
  pages={322-328},
  doi={10.1109/ACC.2016.7524935}}

@article{Keviczky2008,
  author       = {Tam{\'{a}}s Keviczky and
                  Francesco Borrelli and
                  Kingsley Fregene and
                  Datta N. Godbole and
                  Gary J. Balas},
  title        = {Decentralized Receding Horizon Control and Coordination of Autonomous
                  Vehicle Formations},
  journal      = {{IEEE} Trans. Control. Syst. Technol.},
  volume       = {16},
  number       = {1},
  pages        = {19--33},
  year         = {2008},
  url          = {https://doi.org/10.1109/TCST.2007.903066},
  doi          = {10.1109/TCST.2007.903066},
  bibsource    = {dblp computer science bibliography, https://dblp.org}
}

@ARTICLE{Li2012,
  author={Li, Jianzhen and Ren, Wei and Xu, Shengyuan},
  journal={IEEE Transactions on Automatic Control}, 
  title={Distributed Containment Control with Multiple Dynamic Leaders for Double-Integrator Dynamics Using Only Position Measurements}, 
  year={2012},
  volume={57},
  number={6},
  pages={1553-1559},
  doi={10.1109/TAC.2011.2174680}}

@ARTICLE{Shorinwa2024,
  author={Shorinwa, Ola and Schwager, Mac},
  journal={IEEE Transactions on Automatic Control}, 
  title={Distributed Model Predictive Control via Separable Optimization in Multiagent Networks}, 
  year={2024},
  volume={69},
  number={1},
  pages={230-245},
  doi={10.1109/TAC.2023.3269338}}

@article{Fu70088,
author = {Fu, Chenlong and Wu, Jinxian and Dai, Li and Xia, Yuanqing},
title = {Distributed MPC-Based Trajectory Tracking Control for a Multi-Quadrotor UAV Slung Load System},
journal = {IET Control Theory \& Applications},
volume = {19},
number = {1},
pages = {e70088},
doi = {https://doi.org/10.1049/cth2.70088},
url = {https://ietresearch.onlinelibrary.wiley.com/doi/abs/10.1049/cth2.70088},
eprint = {https://ietresearch.onlinelibrary.wiley.com/doi/pdf/10.1049/cth2.70088},
year = {2025}
}

@article{FARINA20121088,
title = {Distributed predictive control: A non-cooperative algorithm with neighbor-to-neighbor communication for linear systems},
journal = {Automatica},
volume = {48},
number = {6},
pages = {1088-1096},
year = {2012},
issn = {0005-1098},
doi = {https://doi.org/10.1016/j.automatica.2012.03.020},
url = {https://www.sciencedirect.com/science/article/pii/S000510981200129X},
author = {Marcello Farina and Riccardo Scattolini}
}

@article{Kan2016DirectedRandomGraphs,
title = {Leader–follower containment control over directed random graphs},
journal = {Automatica},
volume = {66},
pages = {56-62},
year = {2016},
issn = {0005-1098},
doi = {https://doi.org/10.1016/j.automatica.2015.12.016},
url = {https://www.sciencedirect.com/science/article/pii/S0005109815005488},
author = {Zhen Kan and John M. Shea and Warren E. Dixon}
}

@article{RIVERSO20142179,
title = {Plug-and-play model predictive control based on robust control invariant sets},
journal = {Automatica},
volume = {50},
number = {8},
pages = {2179-2186},
year = {2014},
issn = {0005-1098},
doi = {https://doi.org/10.1016/j.automatica.2014.06.004},
url = {https://www.sciencedirect.com/science/article/pii/S0005109814002325},
author = {Stefano Riverso and Marcello Farina and Giancarlo Ferrari-Trecate}
}

@article{Riverso2012PlugandPlayDM,
  title={Plug-and-Play decentralized Model Predictive Control},
  author={Stefano Riverso and Marcello Farina and Giancarlo Ferrari-Trecate},
  journal={2012 IEEE 51st IEEE Conference on Decision and Control (CDC)},
  year={2012},
  pages={4193-4198},
  url={https://api.semanticscholar.org/CorpusID:8226347}
}

@misc{wang2025distributedsafetycriticalmpcmultiagent,
      title={Distributed Safety-Critical MPC for Multi-Agent Formation Control and Obstacle Avoidance}, 
      author={Chao Wang and Shuyuan Zhang and Lei Wang},
      year={2025},
      eprint={2508.19678},
      archivePrefix={arXiv},
      primaryClass={eess.SY},
      url={https://arxiv.org/abs/2508.19678}, 
}

@Inbook{RenBeardMcLain2003,
author="Ren, Wei
and Beard, Randal W.
and McLain, Timothy W.",
editor="",
title="Coordination Variables and Consensus Building in Multiple Vehicle Systems",
bookTitle="Cooperative Control: A Post-Workshop Volume 2003 Block Island Workshop on Cooperative Control",
year="2005",
publisher="Springer Berlin Heidelberg",
address="Berlin, Heidelberg",
pages="171--188",
isbn="978-3-540-31595-7",
doi="10.1007/978-3-540-31595-7_10",
url="https://doi.org/10.1007/978-3-540-31595-7_10"
}

@ARTICLE{LiRenXuTAC2012_PositionOnlyContainment,
  author={Li, Jianzhen and Ren, Wei and Xu, Shengyuan},
  journal={IEEE Transactions on Automatic Control}, 
  title={Distributed Containment Control with Multiple Dynamic Leaders for Double-Integrator Dynamics Using Only Position Measurements}, 
  year={2012},
  volume={57},
  number={6},
  pages={1553-1559},
  doi={10.1109/TAC.2011.2174680}}

@INPROCEEDINGS{LiRenXuACC2011_PositionOnlyTracking,
  author={Li, Jianzhen and Ren, Wei and Xu, Shengyuan},
  booktitle={Proceedings of the 2011 American Control Conference}, 
  title={Distributed coordinated tracking with multiple dynamic leaders for double-integrator agents using only position measurements}, 
  year={2011},
  volume={},
  number={},
  pages={2192-2197},
  doi={10.1109/ACC.2011.5991402}}

@INPROCEEDINGS{koulong2025acc,
  author={Koulong, Armel and Pakniyat, Ali},
  booktitle={American Control Conference (ACC)}, 
  title={Distributed Adaptive Consensus with Obstacle and Collision Avoidance for Networks of Heterogeneous Multi-Agent Systems}, 
  year={2025},
  volume={},
  number={},
  pages={3602-3607},
  doi={10.23919/ACC63710.2025.11107682}}

@article{koulong2025wc,
      title={Robust Multi-Agent Safety via Tube-Based Tightened Exponential Barrier Functions}, 
      author={Armel Koulong and Ali Pakniyat},
      year={2026},
      journal={(accepted for publication in) Nonlinear Analysis: Hybrid Systems; Available: arXiv preprint arXiv:2510.22514},
      eprint={2510.22514},
      archivePrefix={arXiv},
      primaryClass={eess.SY},
      url={https://arxiv.org/abs/2510.22514}, 
}

\end{document}